\newcommand{\qedsymb}{\hfill{\rule{2mm}{2mm}}}
\newcommand{\remove}[1]{}
\DeclareMathOperator{\sa}{\mathit{SA}}
\DeclareMathOperator{\rank}{\mathit{Rank}}
\DeclareMathOperator{\lcp}{\mathit{LCP}}
\DeclareMathOperator{\lr}{\mathit{LR}}
\DeclareMathOperator{\llr}{\mathit{LLR}}
\DeclareMathOperator{\llrc}{\mathit{LLRc}}
\DeclareMathOperator{\llrs}{\mathit{LLRS}}
\DeclareMathOperator{\llrr}{\mathit{LLRr}}
\DeclareMathOperator{\lrs}{\mathit{LRS}}
\DeclareMathOperator{\ct}{\mathit{Count}}
\DeclareMathOperator{\ps}{\mathit{Prefix\_Sum}}
\title{On Longest Repeat Queries Using GPU\thanks{Authors names are
    listed in alphabetical order.}}
\author{Yun Tian  
\and Bojian Xu
\thanks{Corresponding
    author. Phone: +1 (509)359-2817. Fax: +1 (509)359-2215.
    Supported in part by EWU Faculty Grants for
    Research and Creative Works. 
}
}
\institute{
Department of Computer Science\\Eastern Washington University, Cheney,
WA 99004, USA\\
\email{ytian@ewu.edu, bojianxu@ewu.edu}
}
\begin{document}

\maketitle

\begin{abstract}
  Repeat finding in strings has important applications in subfields
  such as computational biology. 
  The challenge of finding the longest repeats covering
  particular string positions was recently proposed and solved by \.{I}leri et
  al., using a total of the optimal $O(n)$ time and space, where $n$
  is the string size.  However, their solution can only find the
  \emph{leftmost} longest repeat for each of the $n$ string position.
  It is also not known how to parallelize their solution.
  In this paper, we propose a new solution for longest repeat
  finding, which although is theoretically suboptimal in time but is
  conceptually simpler and works faster and uses less memory space in
  practice than the optimal solution.  Further, our solution can find
  \emph{all} longest repeats of every string position, while still
  maintaining a faster processing speed and less memory space
  usage. Moreover, our solution is \emph{parallelizable} in the shared
  memory architecture (SMA), enabling it to take advantage of the
  modern multi-processor computing platforms such as the
  general-purpose graphics processing units (GPU).  We have
  implemented both the sequential and parallel versions of our
  solution.  Experiments with both biological and non-biological data
  show that our sequential and parallel solutions are faster than the
  optimal solution by a factor of 2--3.5 and 6--14, respectively, and
  use less memory space.
\end{abstract}

\begin{keywords}
 string, repeats, longest repeats,
 parallel computing, GPU, CUDA
\end{keywords}

\section{Introduction}
\label{sec:intro}
Repetitive structures and regularities finding in genomes and proteins
is important as these structures play important roles in the
biological functions of genomes and
proteins~\cite{Gus97,Smyth2013-survey,saha08.2,saha08.1,LW06,Benson99,BDH09,KurtzSch99}.
It is well known that overall about one-third of the whole human
genome consists of repeated subsequences~\cite{McC1993}; about
10--25\% of all known proteins have some form of repetitive
structures~\cite{LW06}. In addition, a number of significant problems
in molecular sequence analysis can be reduced to repeat
finding~\cite{Mar83}. Another motivation for finding repeats is to
compress the DNA sequences, which is known as one of the most
challenging tasks in the data compression field. DNA sequences consist
only of symbols from {\tt \{ACGT\}} and therefore can be represented
by two bits per character. Standard compressors such as {\tt gzip} and
{\tt bzip} usually use more than two bits per character and therefore
cannot achieve good compression. Many modern genomic sequence data
compression techniques highly rely on the repeat finding in
sequences~\cite{MR04,BF05}.

The notion of maximal
repeat and super maximal repeat~\cite{Gus97,BDH09,KVX-tcbb2012,BBO-spire2012} captures all the
repeats of the whole string in a space-efficient manner, but it does
not track the locality of each repeat and thus can not support the
finding of repeats that cover a particular string position.  
For this reason, \.{I}leri et al.~\cite{IKX-repeat-CORR2015} proposed the
challenge of longest repeat query, which is to find the longest
repetitive substring(s) that covers a particular string position.
Because any substring of a repetitive substring is also repetitive,
the solution to longest repeat query effectively provides an effective
``stabbing'' tool for finding the majority of the repeats covering a
string position.  \.{I}leri et al.\ proposed an $O(n)$ time and space
algorithm that can find the \emph{leftmost} longest repeat of every
string position. Since one has to spend $\Omega(n)$ time and space to
read and store the input string, the solution of \.{I}leri et al.\ is
optimal.

\paragraph{Our contribution.}
In this paper, we propose a new solution for longest repeat query.
Although our solution is theoretically suboptimal in the time cost,
it is conceptually simpler and runs faster and uses less memory
space than the optimal solution in practice.  Our solution can also
find \emph{all} longest repeats for every string position while
still maintaining a faster processing speed and less space usage, whereas
the optimal solution can only find the leftmost candidate.  Further,
our solution can be parallelized in the shared-memory architecture,
enabling it to take advantage of the modern multi-processor computing
platforms such as the general-purpose graphics processing units
(GPU)~\cite{Che2008,Nickolls2010}.  We have implemented both the
sequential and parallel versions of our solution. Experiments with
both biological and non-biological data show that our solution run
faster than the $O(n)$ optimal solution by a factor of 2--3.5 using
CPU and 6--14 using GPU, and use less space in both settings.

\paragraph{Road map.}
After formulating the problem of longest query in
Section~\ref{sec:prob}, we prepare some technical background and
observations in Section~\ref{sec:prelim} for our solutions.
Section~\ref{sec:seq} presents the sequential version of our
solutions. Following the interpretation in Section~\ref{sec:seq}, it
is natural and easy to get the parallel version of our solution, which
is presented in Section~\ref{sec:para}. Section~\ref{sec:exp} shows
the experimental results on the comparison between our solutions and the $O(n)$
solution using real-world data.

\section{Problem Formulation}
\label{sec:prob}

We consider a {\bf string} $S[1\ldots n]$,
 where each character $S[i]$ is
drawn from an alphabet $\Sigma=\{1,2,\ldots, \sigma\}$.
A {\bf substring} $S[i\ldots j]$
of $S$ represents $S[i]S[i+1]\ldots S[j]$ if $1\leq i\leq j \leq n$,
and is an empty string if $i>j$.
String $S[i'\ldots j']$ is a {\bf proper substring} of another string
$S[i\ldots j]$ if $i\leq i' \leq j' \leq j$ and $j'-i' < j-i$. 
%
%
The {\bf length} of a non-empty substring $S[i\ldots j]$, denoted as
$|S[i\ldots j]|$, is $j-i+1$. We define the length of an empty string
as zero. 
A {\bf prefix} of $S$ is a substring $S[1\ldots i]$
for some $i$, $1\leq i\leq n$. 
A {\bf proper prefix} $S[1\ldots i]$ is a prefix of $S$ where $i <
n$.
A {\bf suffix} of $S$ is a substring
$S[i\ldots n]$ for some $i$, $1\leq i\leq n$.  
A {\bf proper suffix} $S[i\ldots n]$ is a suffix of $S$ where $i >
1$.
We say the character $S[i]$ occupies the string {\bf position} $i$.
We say the substring $S[i\ldots j]$ {\bf covers} the $k$th position of
$S$, if $i\leq k \leq j$.  
For two strings $A$ and $B$, we write ${\bf A=B}$ (and say $A$ is {\bf
  equal} to $B$), if $|A|= |B|$ and $A[i]=B[i]$ for 
$i=1,2,\ldots, |A|$.  
%
We say $A$ is lexicographically smaller than $B$,
denoted as ${\bf A < B}$, if (1) $A$ is a proper prefix of $B$, or (2)
$A[1] < B[1]$, or (3) there exists an integer $k > 1$ such that
$A[i]=B[i]$ for all $1\leq i \leq k-1$ but $A[k] < B[k]$.
A substring
$S[i\ldots j]$ of $S$ is {\bf unique}, if there does not exist
another substring $S[i'\ldots j']$ of $S$, such that 
$S[i\ldots j] = S[i'\ldots j']$ but $i\neq i'$. 
A substring is a {\bf repeat} if it is not unique.
A character $S[i]$ is a {\bf singleton}, if it appears only once in
$S$.

\begin{definition}
\label{def:lr}
For a particular string position $k\in \{1,2,\ldots, n\}$,  
the {\bf longest repeat (LR) covering position} ${\bf k}$, denoted
as $\lr_k$,
is 
a repeat substring $S[i\ldots j]$, such that: (1) $i\leq k \leq j$, and 
(2) there does not exist  another repeat substring $S[i'\ldots j']$, such
that $i'\leq k \leq j'$ and $j'-i' > j-i$. 
\end{definition}

Obviously, for any string position $k$, if $S[k]$ is not a singleton,
$\lr_k$ must exist, because at least $S[k]$ itself is a repeat.
Further, there might be multiple choices for $\lr_k$. For example, if
$S={\tt abcabcddbca}$, then $\lr_2$ can be either $S[1\ldots 3]={\tt
  abc}$ or $S[2\ldots 4]={\tt bca}$. 
In this paper, we study the problem of finding the longest repeats of
every string position of $S$.

\bigskip

\noindent{\bf Problem} (longest repeat query): 
For every string position $k\in \{1,2,\ldots,n\}$, 
we want to find $\lr_k$ or the fact that it does not
exist. If multiple choices for $\lr_k$ exist, we want to find all
of them.   

\section{Preliminary}
\label{sec:prelim}

The {\bf suffix array} $\sa[1\ldots n]$ of the string $S$ is a
permutation of $\{1,2,\ldots, n\}$, such that for any $i$ and $j$,
$1\leq i < j \leq n$, we have $S[\sa[i]\ldots n] < S[\sa[j]\ldots n]$.
That is, $\sa[i]$ is the starting position of the $i$th suffix in
the sorted order of all the suffixes of $S$.
The {\bf rank array} $\rank[1\ldots n]$ is the inverse of the suffix
array. That is, $\rank[i]=j$ iff $\sa[j]=i$. 
The {\bf longest common prefix (lcp) array} $\lcp[1\ldots n+1]$ is an
array of $n+1$ integers, such that for $i=2,3,\ldots, n$, $\lcp[i]$ is
the length of the lcp of the two suffixes $S[\sa[i-1]\ldots n]$ and
$S[\sa[i]\ldots n]$. We set $\lcp[1]=\lcp[n+1]=0$.  In the literature,
the lcp array is often defined as an array of $n$ integers. We include
an extra zero at $\lcp[n+1]$ is only to simplify the description 
of our upcoming
algorithms.  
Table~\ref{tab:suflcp} shows the suffix array and the lcp
array of the example string {\tt mississippi}.

\begin{definition}
\label{def:llr}
For a particular string position $k\in \{1,2,\ldots, n\}$, the
{\bf left-bounded longest repeat (LLR) starting at position $k$},
denoted as $\llr_k$,
is a repeat $S[k\ldots j]$,
such that either $j=n$ or $S[k\ldots j+1]$ is unique. 
\end{definition}

Clearly, for any string position $k$, if $S[k]$ is not a singleton,
$\llr_k$ must exist, because at least $S[k]$ itself is a repeat.
Further, if $\llr_k$ does exist, there must be only  one choice,
because $k$ is a fixed string position and the length of $\llr_k$ must
be as long as possible. Lemma~\ref{lem:llr} shows that, given the rank
and lcp arrays of the string $S$, we can directly
calculate any $\llr_k$ or find the fact of its nonexistence.

\begin{table}[t]
\parbox{.40\linewidth}{
\centering
\def\0{\phantom{0}}
{\footnotesize
\begin{tabular}{c|c|c|l}
\toprule
$i$ & $\lcp[i]$  & $\mathit{\sa}[i]$ & suffixes\\
\hline
\hline
\01 & 0 & 11\0  &{\tt i}\\ 
\02 & 1 & \08\0  & {\tt  ippi}\\ 
\03 & 1 & \05\0  & {\tt  issippi}\\ 
\04 & 4 & \02\0  & {\tt  ississippi}\\ 
\05 & 0 & \01\0  & {\tt  mississippi}\\
\06 & 0 & 10\0  & {\tt  pi}\\ 
\07 & 1 &  \09\0  & {\tt ppi}\\
\08 & 0 & \07\0  & {\tt sippi}\\
\09 & 2  & \04\0  & {\tt sissippi}\\
10 & 1  & \06\0  & {\tt ssippi}\\ 
11 & 3 & \03\0  & {\tt ssissippi}\\
12 & 0 & -- & --\\ \bottomrule
\hline
\end{tabular}
}
\caption{The suffix array and the lcp array of an example string $S={\tt mississippi}$.}
\label{tab:suflcp}
}
\hfill
\parbox{.55\linewidth}{
\centering
\def\0{\phantom{0}}
\vspace*{2.7cm}
  \begin{tabular}{|l|l|c|c|c|}
\hline
LLR array type& \#walk steps  & DNA & English  & Protein \\
  \hline
&Minimum  & \0\0\0\0\,1  & \0\0\0\0\0\,1 & \0\0\0\0\,1\\

raw&Maximum & 14,836  & 109,394& 25,822\\

&Average  ($=\alpha$) & \0\0\0\,48  & \0\04,429& \0\0\,215\\

\hline

&Minimum  & \0\0\0\0\,1  & \0\0\0\0\0\,1  & \0\0\0\0\,1  \\

compact &Maximum   & \0\0\0\,14  & \0\0\0\0\,10  & \0\0\0\,35  \\

&Average  ($=\beta$)  &\0\0\0\0\,6  & \0\0\0\0\0\,2  & \0\0\0\0\,3  \\
\hline
   \end{tabular}
  \caption[test]{The number of walk steps in the LR's calculation using our solutions for
    several $50$MB files from Pizza\&Chili~\footnotemark}
  \label{tab:steps}
}
\end{table}

\footnotetext{\url{http://pizzachili.dcc.uchile.cl/texts.html}}

\begin{lemma}
\label{lem:llr}
For $i=1,2,\ldots,n$: 
$$
\llr_i = 
\left \{
\begin{array}{lll}
S[i\ldots i + L_i-1] & , & \textrm{\ \ \ if \ \ } L_i > 0\\
\textit{does not exist} & , & \textrm{\ \ \ if \ \ }L_i = 0
\end{array}
\right.
$$
where $L_i = \max\{\lcp[\rank[i]],\lcp[\rank[i]+1]\}$.
\end{lemma}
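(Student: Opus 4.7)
The plan is to reduce Lemma~\ref{lem:llr} to a single cleaner claim: for every $i\in\{1,\ldots,n\}$, $L_i$ equals the length of the longest prefix of $S[i\ldots n]$ that also occurs as a prefix of some other suffix $S[j\ldots n]$ with $j\neq i$. Granting this claim, both cases of the lemma drop out of Definition~\ref{def:llr}. If $L_i > 0$, then $S[i\ldots i+L_i-1]$ occurs both starting at $i$ and starting at $j$, so it is a repeat; moreover either $i+L_i-1=n$ or the extended string $S[i\ldots i+L_i]$ has no second occurrence and is therefore unique, matching the definition of $\llr_i$. If $L_i = 0$, then not even $S[i]$ has a second occurrence in $S$, so $S[i]$ is a singleton and $\llr_i$ does not exist.

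First I would establish the ``$\geq$'' side of the claim by exhibiting a concrete $j$. By definition of $\lcp$ and $\rank$, $\lcp[\rank[i]]$ and $\lcp[\rank[i]+1]$ are the lengths of the longest common prefixes between $S[i\ldots n]$ and its immediate lexicographic predecessor and successor in the suffix array, respectively. Taking $j$ to be the starting position of whichever neighbor witnesses the larger of these two values gives another suffix that shares a prefix of length exactly $L_i$ with $S[i\ldots n]$.

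Next I would establish the ``$\leq$'' side by a lexicographic sandwich argument. Assume toward a contradiction that some suffix $S[j\ldots n]$ with $j\neq i$ shares a prefix $P$ of length $L_i+1$ with $S[i\ldots n]$, and without loss of generality suppose $\rank[j] > \rank[i]$. Since the suffixes are sorted lexicographically and both $S[i\ldots n]$ and $S[j\ldots n]$ begin with $P$, every suffix whose rank lies strictly between $\rank[i]$ and $\rank[j]$ must also begin with $P$. In particular $S[\sa[\rank[i]+1]\ldots n]$ begins with $P$, forcing $\lcp[\rank[i]+1]\geq |P| = L_i+1$ and contradicting $L_i = \max\{\lcp[\rank[i]],\lcp[\rank[i]+1]\}$. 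The symmetric case $\rank[j] < \rank[i]$ uses $\lcp[\rank[i]]$ in place of $\lcp[\rank[i]+1]$, and the sentinel values $\lcp[1]=\lcp[n+1]=0$ keep the formula valid in the boundary cases $\rank[i]=1$ and $\rank[i]=n$, since in those cases $S[i\ldots n]$ truly has no neighbor on one side.

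The main obstacle I expect is stating the lex-sandwich step crisply: one must show that if $A \leq C \leq B$ lexicographically and both $A$ and $B$ start with some string $P$, then $C$ must start with $P$ as well. This reduces to a short case analysis on the first position at which $C$ could differ from $P$, contradicting either $A\leq C$ or $C\leq B$ at that position. Once this auxiliary fact is in hand, the remainder of the argument is essentially bookkeeping against the definitions of $\llr_i$, $\rank$, and $\lcp$.
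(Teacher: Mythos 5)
Your proof is correct and takes essentially the same route as the paper's: the paper's three-line proof simply \emph{asserts} the key fact that $L_i$ is the length of the longest common prefix between $S[i\ldots n]$ and any other suffix of $S$, and then reads off both cases from Definition~\ref{def:llr} exactly as you do. The only difference is that you actually prove that fact (the neighbor-witness for ``$\geq$'' and the lexicographic-sandwich argument for ``$\leq$''), which the paper leaves implicit as standard suffix-array folklore.
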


\begin{proof}
  Note that $L_i$ is the length of the lcp between the suffix
  $S[i\ldots n]$ and any other suffix of $S$.  If $L_i > 0$, it means
  substring $S[i\ldots i+L_i-1]$ is the lcp among $S[i\ldots n]$ and any
  other suffix of $S$. So $S[i\ldots i+L_i-1]$ is $\llr_i$.  Otherwise
  ($L_i = 0$), the letter $S[i]$ is a singleton, so $\llr_i$ does not
  exist.\qed
\end{proof}

Clearly, the left-ends of $\llr_1,\llr_2,\ldots,\llr_n$ strictly
increase as $1,2,\ldots, n$.  The next lemma shows the right-ends of
LLR's also monotonically increase.

\begin{lemma}
\label{lem:llr-length}
$|\llr_i|\leq |\llr_{i+1}|+1$, for every $i=1,2,\ldots, n-1$. 
\end{lemma}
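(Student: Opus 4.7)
The plan is to prove the contrapositive-like form $|\llr_{i+1}| \geq |\llr_i| - 1$, which is equivalent to the stated inequality. The underlying intuition is trivial: if $\llr_i$ is a repeat, then stripping its leading character yields a shorter string that is still a repeat and that starts at position $i+1$; hence $\llr_{i+1}$, being the \emph{longest} repeat starting at $i+1$, cannot be shorter than this shifted substring.

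I would first dispose of the degenerate cases. If $|\llr_i| = 0$ (meaning $\llr_i$ does not exist) the inequality $0 \leq |\llr_{i+1}| + 1$ is immediate, and if $|\llr_i| = 1$ the inequality reduces to $|\llr_{i+1}| \geq 0$, again trivially true. So I only need to treat $|\llr_i| = L \geq 2$. In this case $\llr_i = S[i\ldots i + L - 1]$ is by definition a repeat, so there exists an index $i' \neq i$ with $1 \leq i' \leq n - L + 1$ and $S[i'\ldots i' + L - 1] = S[i\ldots i + L - 1]$.

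Now I would argue the shift step. Dropping the first character from both occurrences gives $S[i+1\ldots i+L-1] = S[i'+1\ldots i'+L-1]$, a pair of equal substrings of length $L - 1 \geq 1$ that start at two distinct positions, since $i + 1 \neq i' + 1$ follows from $i \neq i'$. Hence $S[i+1\ldots i+L-1]$ is itself a repeat starting at position $i+1$, which by the maximality built into the definition of $\llr_{i+1}$ (Definition~\ref{def:llr}, cf.\ Lemma~\ref{lem:llr}) forces $|\llr_{i+1}| \geq L - 1$, as required. The only detail to check is that the shifted right endpoint $i' + L - 1$ is still within $[1,n]$, but this is immediate from $\llr_i$ being a substring of $S$ of length $L$ occurring at position $i'$. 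I do not expect any real obstacle; the whole argument is essentially the one-line observation that chopping a character off the left of a repeat leaves a (possibly empty, but here nonempty) repeat.
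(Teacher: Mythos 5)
Your proposal is correct and follows essentially the same route as the paper's proof: dispose of the cases $|\llr_i|\in\{0,1\}$, then observe that for $|\llr_i|\geq 2$ the suffix $S[i+1\ldots j]$ of $\llr_i=S[i\ldots j]$ is still a repeat starting at $i+1$, so $|\llr_{i+1}|\geq|\llr_i|-1$. The only difference is that you explicitly exhibit the second occurrence to justify that the truncation remains a repeat, a detail the paper leaves implicit.
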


\begin{proof}
 The claim is obviously correct for the cases when $\llr_i$ does not exist
  ($|\llr_i|=0$) or $|\llr_i|=1$, so we only consider the case when
  $|\llr_i| \geq 2$. Suppose $\llr_i = S[i\ldots j]$, $i < j$. It
  follows that $i+1 \leq j$. Since $S[i\ldots j]$ is a repeat, its
  substring $S[i+1\ldots j]$ is also a repeat. Note that $\llr_{i+1}$
  is the longest repeat substring starting from  position $i+1$, so
  $|\llr_{i+1}| \geq |S[i+1\ldots j]| = |\llr_i|-1$. \qed
\end{proof}

\begin{lemma}
\label{lem:lr-llr}
Every LR is an LLR.
\end{lemma}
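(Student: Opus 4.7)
The plan is to pick an arbitrary LR, say $\lr_k = S[i\ldots j]$ for some covered position $k$, and show that this very substring satisfies the definition of $\llr_i$. Since the LLR starting at a given position is unique (as observed right after Definition~\ref{def:llr}), establishing that $S[i\ldots j]$ meets the LLR criteria at its left-end $i$ is enough to conclude that this LR \emph{is} the LLR starting there.

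First, I would note that $S[i\ldots j]$ is a repeat and starts at position $i$, so two of the three requirements of $\llr_i$ are immediate. What remains is the right-boundary condition: either $j=n$, or $S[i\ldots j+1]$ is unique.

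The main (and essentially only) step is a short contradiction argument for this last condition. Suppose $j<n$ and $S[i\ldots j+1]$ is still a repeat. Then $S[i\ldots j+1]$ is a repeat substring of length $(j-i+1)+1 = |\lr_k|+1$ whose span $[i,j+1]$ contains $[i,j]$, hence in particular contains the position $k$ (since $i\le k\le j\le j+1$). This contradicts Definition~\ref{def:lr}, which requires that no repeat covering $k$ be strictly longer than $\lr_k$. Therefore either $j=n$ or $S[i\ldots j+1]$ must be unique, so $S[i\ldots j]=\llr_i$.

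I do not anticipate a real obstacle here: the argument is essentially an observation that a longest repeat covering $k$ cannot be right-extended while remaining a repeat, which is exactly the right-maximality condition built into the LLR definition. The only subtlety to flag clearly is that $S[i\ldots j+1]$ genuinely still covers $k$, which follows trivially from $i\le k\le j$, so the extension to the right does not jeopardize the covering requirement on $k$.
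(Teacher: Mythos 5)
Your proof is correct and follows essentially the same route as the paper's: both argue by contradiction that if $\lr_k=S[i\ldots j]$ were not the LLR starting at $i$, it could be extended to the right to a strictly longer repeat that still covers $k$, contradicting the maximality in Definition~\ref{def:lr}. Your version is slightly more explicit in checking the LLR definition (the case $j=n$ and the uniqueness of $S[i\ldots j+1]$) and in noting that the extension still covers $k$, but the substance is identical.
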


\begin{proof}
Assume that $\lr_k=S[i\ldots j]$ is not an LLR. Note that $S[i\ldots
j]$ is a repeat starting from position $i$. If $S[i\ldots j]$ is not
an LLR, it means $S[i\ldots j]$ can be extend to some position
$j' > j$, so that $S[i\ldots j']$ is still a repeat and  also covers
position $k$. That says, $|S[i\ldots j']| > |S[i\ldots j]|$.
However, the contradiction is that $S[i\ldots j]$ is already the longest repeat
covering position $k$. \qed
\end{proof}

\section{Two Simple and Parallelizable Sequential Algorithms}
\label{sec:seq}
We know every LR is an LLR (Lemma~\ref{lem:lr-llr}), so the
calculation of a particular $\lr_k$ is actually a search for the
longest one among all LLR's that cover position $k$. Our discussion starts with the
finding of the leftmost LR for every position. In the end, an trivial
extension will be made to find all LR's for every string position.

\subsection{Use the raw LLR array}
We first calculate $\llr_i$, for $i=1,2,\ldots,n$, using
Lemma~\ref{lem:llr}, and save the result in an array $\llrr[1\ldots
n]$, where each $\llrr[i] = |\llr_i|$.  We call $\llrr[1\ldots n]$  the
\emph{raw} LLR array.  Because the rightmost LLR that covers position
$k$ is $\llr_k$ and the right boundaries of all LLR's monotonically
increase (Lemma~\ref{lem:llr-length}), the search for $\lr_k$ becomes
simply a walk from $\llr_k$ toward the left. The walk will stop when
it sees an LLR that does not cover position $k$ or it has reached the
left end of the LLRr array.  During this walk, we will record the
longest LLR that covers position $k$. Ties can be broken by storing
the leftmost such LLR. This yields the simple
Algorithm~\ref{algo:seq-1}, which outputs every LR as a $\langle
start, length\rangle$ tuple, representing the starting position and the
length of the LR.


\begin{algorithm}[t]
{\small
  \caption{Sequential finding of the leftmost $\lr_k$, $k=1,\ldots,n$, using the raw LLR array.}
\label{algo:seq-1}

\KwIn{The rank array and 
      the lcp array of the string $S$} 

\smallskip 

\tcc{Calculate $\llr_1,\llr_2,\ldots,\llr_n$.}

\lFor{$i=1,2,\ldots,n$}{
  $\llrr[i] \leftarrow \max\{\lcp[\rank[i]],\lcp[\rank[i]+1]\}$\tcp*{Length
    of $\llr_i$}

}

\smallskip

\tcc{Calculate $\lr_1,\lr_2,\ldots,\lr_n$.}
\For{$k=1,2,\ldots,n$}{

$\lr \leftarrow \langle -1,0\rangle$ \tcp*{$\langle start, end\rangle$: start and ending
  position of $\lr_k$.}

\For{$i = k$ down to $1$}{

  \If(\tcp*[f]{$\llr_i$ does not exist or does not cover $k$.}){$i+\llrr[i]-1<k$}{break\tcp*{Early stop}}
  \ElseIf{$\llrr[i] \geq \lr.length$}
   {$\lr\leftarrow \langle i,\llrr[i]\rangle$\;}
}
print $\lr$\;
}

}

\end{algorithm}


\begin{lemma}
\label{lem:seq-1}  
Given the rank and lcp arrays, Algorithm~\ref{algo:seq-1} can find 
the leftmost $\lr_k$ for every $k=1,2,\ldots,n$, using a total of 
$O(n)$ space and $O(\alpha n)$ time, where $\alpha$ is the average
number of LLR's that cover a string position. 
\end{lemma}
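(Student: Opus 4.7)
The plan is to verify three claims in order: correctness of the output, the $O(n)$ space bound, and the $O(\alpha n)$ time bound.

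For correctness, I would start from Lemma~\ref{lem:lr-llr}: every candidate for $\lr_k$ is some $\llr_i$ with $i \le k$ whose right endpoint $i + \llrr[i] - 1$ is at least $k$, and the leftmost longest such $\llr_i$ is exactly what we want. The inner loop walks $i$ downward from $k$, so ties in length are automatically broken toward the smaller (leftmost) starting index thanks to the $\geq$ test in the update. The point that really needs justification is the early-break condition: once some $i$ satisfies $i + \llrr[i] - 1 < k$, we must show that no $i' < i$ can satisfy $i' + \llrr[i'] - 1 \ge k$. I would derive this directly from Lemma~\ref{lem:llr-length}: rearranging $|\llr_i| \le |\llr_{i+1}| + 1$ gives $i + \llrr[i] - 1 \le (i+1) + \llrr[i+1] - 1$, so the right endpoints of LLR's are monotonically non-decreasing in $i$. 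Hence once the right endpoint drops below $k$, it stays below $k$ for every smaller index, and the \texttt{break} is safe. This monotonicity is the heart of the correctness argument; as a side benefit, it also shows that when $S[k]$ is a singleton the algorithm immediately breaks and correctly reports $\langle -1, 0\rangle$.

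For the space bound, the algorithm stores the rank array, the lcp array, and the LLRr array, each of length $n$, plus a constant amount of scratch state and a single tuple per output, totalling $O(n)$.

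For the time bound, the first loop is clearly $O(n)$ by Lemma~\ref{lem:llr}. For the nested loop, I would argue that for each fixed $k$ the inner loop performs exactly one iteration per $i \le k$ whose $\llr_i$ covers position $k$, plus at most one extra iteration that triggers the break (justified by the monotonicity above, which prevents any further ``missed'' covering LLR's). Summing over $k$,
\[
\sum_{k=1}^n \bigl|\{\,i : i \le k \le i + \llrr[i] - 1\,\}\bigr| = \alpha n
\]
by the definition of $\alpha$, and the extra break iterations contribute at most $n$, so the total cost is $O(n + \alpha n) = O(\alpha n)$. The main subtlety I anticipate is pinning down this double count cleanly and making sure the early-break guarantee really does cap the inner-loop work by the number of LLR's that actually cover $k$, rather than by something larger.
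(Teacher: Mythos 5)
Your proposal is correct and follows essentially the same route as the paper: the paper's own proof of this lemma covers only the time and space accounting (walk length bounded by the number of covering LLR's, three length-$n$ arrays), while the correctness of the early break via the monotonicity of right endpoints from Lemma~\ref{lem:llr-length} is argued in the prose immediately preceding Algorithm~\ref{algo:seq-1}, exactly as you derive it. Your write-up is merely more explicit in consolidating that correctness argument into the proof itself; no gap.
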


\begin{proof}
  (1) The time cost for the $\llrr$ array calculation is
  obviously $O(n)$.  The algorithm finds the LR of each of the
  $n$ string positions.  The average time cost for each LR
  calculation is bounded by the average number of walk steps,
  which is equal to the average number of LLR's that cover a
  string position. Altogether, the time cost is
  $O(\alpha n)$.  (2) The main memory space is used by the rank,
  lcp, and $\llrr$ arrays, each of which has $n$ integers. So
  altogether the space cost is $O(n)$ words.\qed
\end{proof}

\begin{theorem}
\label{thm:seq-1} 
We can find the leftmost $\lr_k$ for every $k=1,2,\ldots,n$, using a
total of $O(n)$ space and $O(\alpha n)$ time, where $\alpha$ is the
average number of LLR's that cover a string position.
\end{theorem}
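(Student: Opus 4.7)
My plan is that Theorem~\ref{thm:seq-1} is essentially the packaging of Lemma~\ref{lem:seq-1}, so the proof proposal amounts to pointing at Algorithm~\ref{algo:seq-1} and convincing the reader it is correct and that Lemma~\ref{lem:seq-1}'s resource bounds are genuine. The core algorithmic claim has three pieces: (i) the $\llrr$ array can be produced in $O(n)$ time and space, (ii) the inner left-ward walk at position $k$ does encounter every LLR that covers $k$ and nothing more, and (iii) summed over all $k$ the walks cost $O(\alpha n)$ time.

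For (i), I would just invoke Lemma~\ref{lem:llr}: $\llrr[i]=\max\{\lcp[\rank[i]],\lcp[\rank[i]+1]\}$, computed in constant time per index once the rank and lcp arrays are at hand, so the initialization loop costs $O(n)$ time and $O(n)$ words of space (three length-$n$ integer arrays). For (ii), I would first use Lemma~\ref{lem:lr-llr} to restrict attention to LLRs: since every LR is an LLR, $\lr_k$ is the longest among $\{\llr_i : i\le k \le i+\llrr[i]-1\}$, and among such maximizers the leftmost is obtained by iterating $i$ downward and using the tie-breaking update $\llrr[i]\ge \lr.\mathit{length}$. Then I would justify the early-stop rule: by Lemma~\ref{lem:llr-length} we have $|\llr_i|\le|\llr_{i+1}|+1$, which rewrites as $i+|\llr_i|-1\le (i+1)+|\llr_{i+1}|-1$, i.e., the right-end of $\llr_i$ is non-decreasing in $i$. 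Hence once the walk reaches an index $i$ with $i+\llrr[i]-1<k$, every smaller index also fails to cover $k$, so \texttt{break} is safe and no covering LLR is missed.

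For (iii), I would define $\alpha$ precisely as $\frac{1}{n}\sum_{k=1}^n |\{i:i\le k\le i+\llrr[i]-1\}|$, so the average number of LLRs that cover a string position. The walk at $k$ performs (up to a constant overhead) exactly one step per LLR that covers $k$, plus at most one extra step to hit the stopping condition. Summing over $k=1,\ldots,n$ then gives $\sum_k O(1+\#\text{LLRs covering }k) = O(n+\alpha n) = O(\alpha n)$ time. The $O(n)$ space is already accounted for in (i).

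The only step that is not purely mechanical is justifying the early termination, since a reader could reasonably worry that an LLR further to the left might ``jump back'' to the right and still cover $k$; the right-end monotonicity inherited from Lemma~\ref{lem:llr-length} is exactly what rules this out, and that is the one observation I would emphasize. Everything else is bookkeeping, and the theorem then drops out immediately by repackaging Lemma~\ref{lem:seq-1}.
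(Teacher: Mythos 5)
Your argument for the correctness and the $O(\alpha n)$ time / $O(n)$ space bounds of Algorithm~\ref{algo:seq-1} itself is sound, and in places more careful than the paper's: your justification of the early stop via Lemma~\ref{lem:llr-length} (rewriting $|\llr_i|\le|\llr_{i+1}|+1$ as monotonicity of the right ends, so that the LLR's covering $k$ form a contiguous block of indices ending at $k$ and nothing to the left of the break point can ``jump back'') is exactly the observation the paper relies on but never spells out, and your accounting of one walk step per covering LLR plus one terminating step does give the claimed $O(\alpha n)$ total.

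However, there is a gap between what you prove and what Theorem~\ref{thm:seq-1} asserts. Everything you establish is conditioned on the rank and lcp arrays being ``at hand,'' which is precisely the hypothesis of Lemma~\ref{lem:seq-1}; the theorem, by contrast, is unconditional, so the construction of those arrays must be charged to its stated budget. The paper's proof of the theorem consists entirely of discharging that hypothesis: the suffix array is built in $O(n)$ time and space \cite{KA-SA2005}, the rank array is its trivial inverse, and the lcp array follows in another $O(n)$ time and space \cite{KLAAP01}; only then does Lemma~\ref{lem:seq-1} yield the theorem. Your proposal never says where the rank and lcp arrays come from, so as written it proves a (more detailed) version of Lemma~\ref{lem:seq-1} rather than Theorem~\ref{thm:seq-1}. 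The fix is one sentence, but it is the one sentence the theorem exists to add. A very minor further point: your final step $O(n+\alpha n)=O(\alpha n)$ implicitly assumes $\alpha=\Omega(1)$; the paper makes the same silent assumption, so this is not a defect relative to it.
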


\begin{proof}
  The suffix array of $S$ can be constructed using existing 
  $O(n)$-time and space algorithms (For example, \cite{KA-SA2005}). After the
  suffix array is constructed, the rank array can be trivially created
  using another $O(n)$ time and space.  We can then use the suffix
  array and the rank array to construct the lcp array using another
  $O(n)$ time and space~\cite{KLAAP01}. Combining with
  Lemma~\ref{lem:seq-1}, the claim in the theorem is proved. \qed
\end{proof}

\paragraph{Extension: find all LR's for every string position.}
As we have demonstrated in the example after Definition~\ref{def:lr},
a particular string position may be covered by multiple LR's, but
Algorithm~\ref{algo:seq-1} can only find the leftmost one. However,
extending it to find all LR's for every string position is trivial:
During each walk, we simply report all the longest LLR's that cover the
string position, of which we are computing the LR.  In order to do so, we will need to do the same walk
twice. The first walk is to find the length of the LR and the second
walk will actually report all the LR's.  We give the pseudocode of
this procedure in Algorithm~\ref{algo:seq-1-ext} in the appendix. This
algorithm certainly has another extra $O(\alpha)$ time cost on average
for each string position's LR calculation due to the extra walk, but it still gives a
total of $O(\alpha n)$ time cost and $O(n)$ space cost.

\begin{corollary}
\label{cor:seq-1-ext} 
We can find all LR's covering every position $k=1,2,\ldots,n$, using a
total of $O(n)$ space and $O(\alpha n)$ time, where $\alpha$ is the
average number of LLR's that cover a string position.
\end{corollary}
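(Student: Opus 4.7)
The plan is to reuse essentially all of the machinery of Theorem~\ref{thm:seq-1}. First I would build the suffix array, rank array, lcp array, and the raw LLR array $\llrr[1\ldots n]$ exactly as in Lemma~\ref{lem:seq-1}, which costs $O(n)$ time and space. By Lemma~\ref{lem:lr-llr}, every LR covering a position $k$ is also an LLR, so the set of all LRs covering $k$ coincides with the set of LLRs of maximum length among those LLRs that cover $k$. This single observation is what lets us reduce the all-LRs problem to a small modification of Algorithm~\ref{algo:seq-1}.

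For each position $k$, instead of a single left-ward walk that maintains one ``best so far'' tuple, I would perform two walks. The first walk starts at index $i=k$ and decreases $i$, using exactly the same early-termination test $i+\llrr[i]-1 < k$ (valid by Lemma~\ref{lem:llr-length}), and records only the maximum value $\ell^{\ast}$ of $\llrr[i]$ seen among LLRs that cover $k$. The second walk repeats the same traversal and emits every tuple $\langle i,\llrr[i]\rangle$ with $\llrr[i]=\ell^{\ast}$ and $i+\llrr[i]-1\ge k$. Correctness follows because the first pass determines the exact common length shared by all LRs that cover $k$, while the second pass enumerates precisely the LLRs of that length whose interval contains $k$; by Lemma~\ref{lem:lr-llr} these are exactly all the LRs covering $k$.

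The cost analysis is essentially identical to that of Lemma~\ref{lem:seq-1}. Each of the two walks visits at most as many indices as there are LLRs covering $k$, which by definition of $\alpha$ is $O(\alpha)$ on average over $k$. Summing over all $n$ positions yields $O(\alpha n)$ time overall. No auxiliary data structure is introduced beyond a single scalar $\ell^{\ast}$ per position plus the output list itself, so the working space remains $O(n)$ words, matching Theorem~\ref{thm:seq-1}.

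The only real subtlety I anticipate is justifying why a single walk does not suffice in the same way it did for the leftmost-only case. If we tried to report all maximum-length candidates on the fly in one pass, we might emit tuples that are later invalidated when a strictly longer LLR appears further to the left; retracting them would force either extra bookkeeping per position or a second pass anyway. Paying a factor of two in running time, which does not change the $O(\alpha n)$ bound, is the cleanest resolution, and it is exactly the route taken by Algorithm~\ref{algo:seq-1-ext} referenced in the appendix.
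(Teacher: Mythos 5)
Your proposal is correct and follows essentially the same route as the paper: the paper also reduces the all-LRs case to Algorithm~\ref{algo:seq-1} by performing the identical leftward walk twice per position (first to determine the maximum length, then to report every covering LLR attaining it), and observes that the extra $O(\alpha)$ average cost per position leaves the $O(\alpha n)$ time and $O(n)$ space bounds unchanged. Your justification of the early-stop test via Lemma~\ref{lem:llr-length} and of correctness via Lemma~\ref{lem:lr-llr} matches the paper's reasoning.
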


\paragraph{Comment:} 
(1) Algorithm~\ref{algo:seq-1} and~\ref{algo:seq-1-ext}
 are cache
friendly. Observe that the finding of every LR essentially is a linear
walk over a continuous chunk of the LLRr array. For real-world data, the
number of steps in every such walk is quite limited, as shown in the
upper rows of Table~\ref{tab:steps}. Note that the {\tt English}
dataset gives a much higher average number of walk steps, because the
data was synthesized by appending several real-world English texts
together, making many paragraphs appear several times.  Because of the
few walk steps needed for real-world data, the walking procedure can
thus be well cached in the L2 cache, whose size is around several MBs
in most nowadays desktops' CPU architecture, making our algorithm much
faster in practice. Note that the optimal $O(n)$
algorithm~\cite{IKX-repeat-CORR2015} uses a 2-table system to achieve its
optimality, which however has quite a pattern of random accessing the
different array locations during its run and thus is not cache
friendly.  We will demonstrate the comparison with more details in
Section~\ref{sec:exp}.
(2) Algorithm~\ref{algo:seq-1} and~\ref{algo:seq-1-ext} are
parallelizable in  shared-memory architecture. First, each LLR can be
calculated independently by a separate thread. After all LLR's are
calculated, each $LR$ can also be calculated independently by a
separate thread going through an independent walk. This enables us to
implement this algorithm on GPU, which supports massively parallel threads 
 using data parallelism.

\remove{
\begin{table}[t]
\centering
\def\0{\phantom{0}}
  \begin{tabular}{|l|l|c|c|c|}
\hline
LLR array type& \#walk steps  & DNA & English  & Protein \\
  \hline
&Minimum  & \0\0\0\0\,1  & \0\0\0\0\0\,1 & \0\0\0\0\,1\\

raw&Maximum & 14,836  & 109,394& 25,822\\

&Average  ($=\alpha$) & \0\0\0\,48  & \0\04,429& \0\0\,215\\

\hline

&Minimum  & \0\0\0\0\,1  & \0\0\0\0\0\,1  & \0\0\0\0\,1  \\

compact &Maximum   & \0\0\0\,14  & \0\0\0\0\,10  & \0\0\0\,35  \\

&Average  ($=\beta$)  &\0\0\0\0\,6  & \0\0\0\0\0\,2  & \0\0\0\0\,3  \\
\hline
   \end{tabular}
  \caption[test]{The number of walk steps in the LR's calculation using our solutions for
    several $50$MB files from Pizza\&Chili~\footnotemark}
  \label{tab:steps}
\end{table}
\footnotetext{\url{http://pizzachili.dcc.uchile.cl/texts.html}}
}

\subsection{Use  the Compact LLR Array}

Observe that an LLR can be a substring (suffix, more precisely) of
another LLR. For example, suppose $S={\tt ababab}$, then
$\llr_4=S[4\ldots 6]={\tt bab}$, which is a substring of
$\llr_3=S[3\ldots 6] = {\tt abab}$. We know every LR must be an LLR
(Lemma~\ref{lem:lr-llr}). So, if an $\llr_i$ is a substring of
another $\llr_j$, $\llr_i$ can never be the LR of any string position,
because every position covered by $\llr_i$ is also covered by at least
another longer LLR, $\llr_j$. 

\begin{definition}
We say an LLR is \emph{useless} if it is
a substring of another LLR; otherwise, it is \emph{useful}.
\end{definition}

Recall that in Algorithm~\ref{algo:seq-1} and~\ref{algo:seq-1-ext}, the
calculation of a particular $\lr_i$ is a search for the longest one
among all LLR's that cover position $i$. This search procedure is
simply a walk from $\llr_i$ toward the left until it sees an LLR that
does not cover position $i$ or reaches the left end of the LLRr array.
This search can be potentially sped up, if we have had all useless
LLR's eliminated before any search is performed. We will use a new
array LLRc, called the \emph{compact} LLR array, to store all the
useful LLR's in the ascending order of their left ends (as well as of
their right ends, automatically).

By Lemma~\ref{lem:llr-length}, we know if $\llr_{i-1}$ is not empty,
the right boundary of $\llr_i$ is on or after the right boundary of
$\llr_{i-1}$, for any $i\geq 2$.  So, we can construct the $\llrc$
array in one pass as follows.  We will calculate every $\llr_i$ using
Lemma~\ref{lem:llr}, for $i=1,2,\ldots,n$, and will eliminate every
$\llr_i$ if $|\llr_i| = 0$ or $|\llr_i| = |\llr_{i-1}|-1$.  Because of
the elimination of the useless LLR's, we will have to save each LLR as
a $\langle {\tt start,length}\rangle$ tuple, representing the starting
position and the length of the LLR, in the LLRc array.
Figure~\ref{fig:raw-compact} shows the geometric perspective of the
elements in an example $\llrr$ array and its corresponding $\llrc$
array, where every LLR is represented by a line segment whose start
and ending position represent the start and ending position of the
LLR.

Note that, in the LLRc array, any two LLR's share neither the same
left-end point (obviously) nor the same right-end point. In other
words, the left-end points of all useful LLR's strictly increase, and
so do their right-end points, i.e., all the elements in the LLRc
array have been sorted in the strict increasing order of their left-end (as well as
right-end) points. See Figure~\ref{fig:comp} for an example. 
Therefore, given a string position, we will be able
to find the leftmost useful LLR that covers that position using a binary
search over the LLRc array and the time cost for such a binary search is bounded by
$O(\log n)$. After that, we will simply walk along the LLRc array,
starting from the LLR returned by the binary search and toward the
right. The walk will stop when it sees an LLR that does not cover the
string position or it has reached the right end of the LLRc array.
During the walk, we will just report the longest LLR that covers the
given string position. Ties are broken by picking the leftmost such
longest LLR.  This leads to the Algorithm~\ref{algo:seq-2}.

\begin{figure}[t]
\centering
\begin{subfigure}{.5\textwidth}
  \centering
  \includegraphics[width=0.8\linewidth]{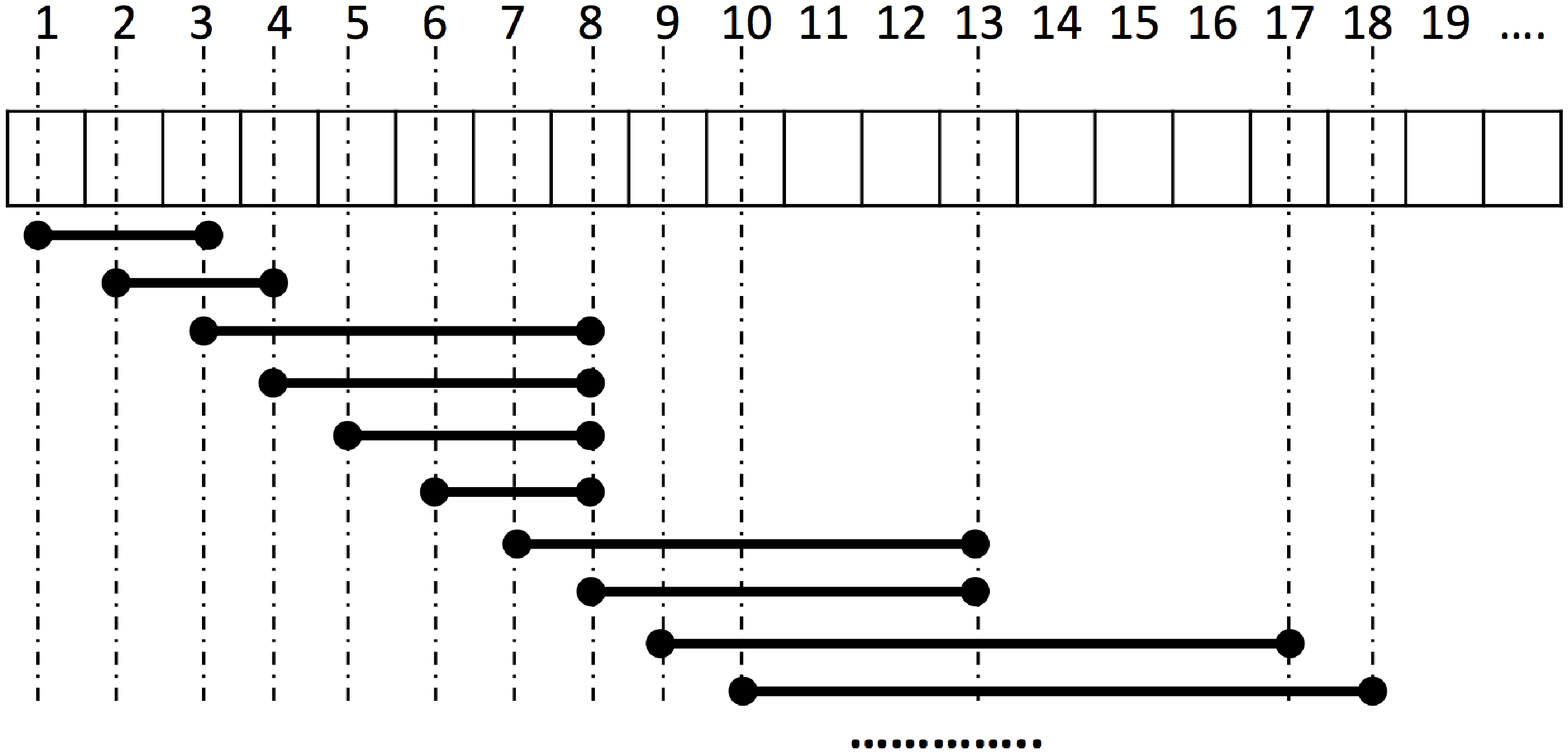}
  \caption{The raw LLR array}
  \label{fig:raw}
\end{subfigure}%
\begin{subfigure}{.5\textwidth}
  \centering
  \includegraphics[width=0.8\linewidth]{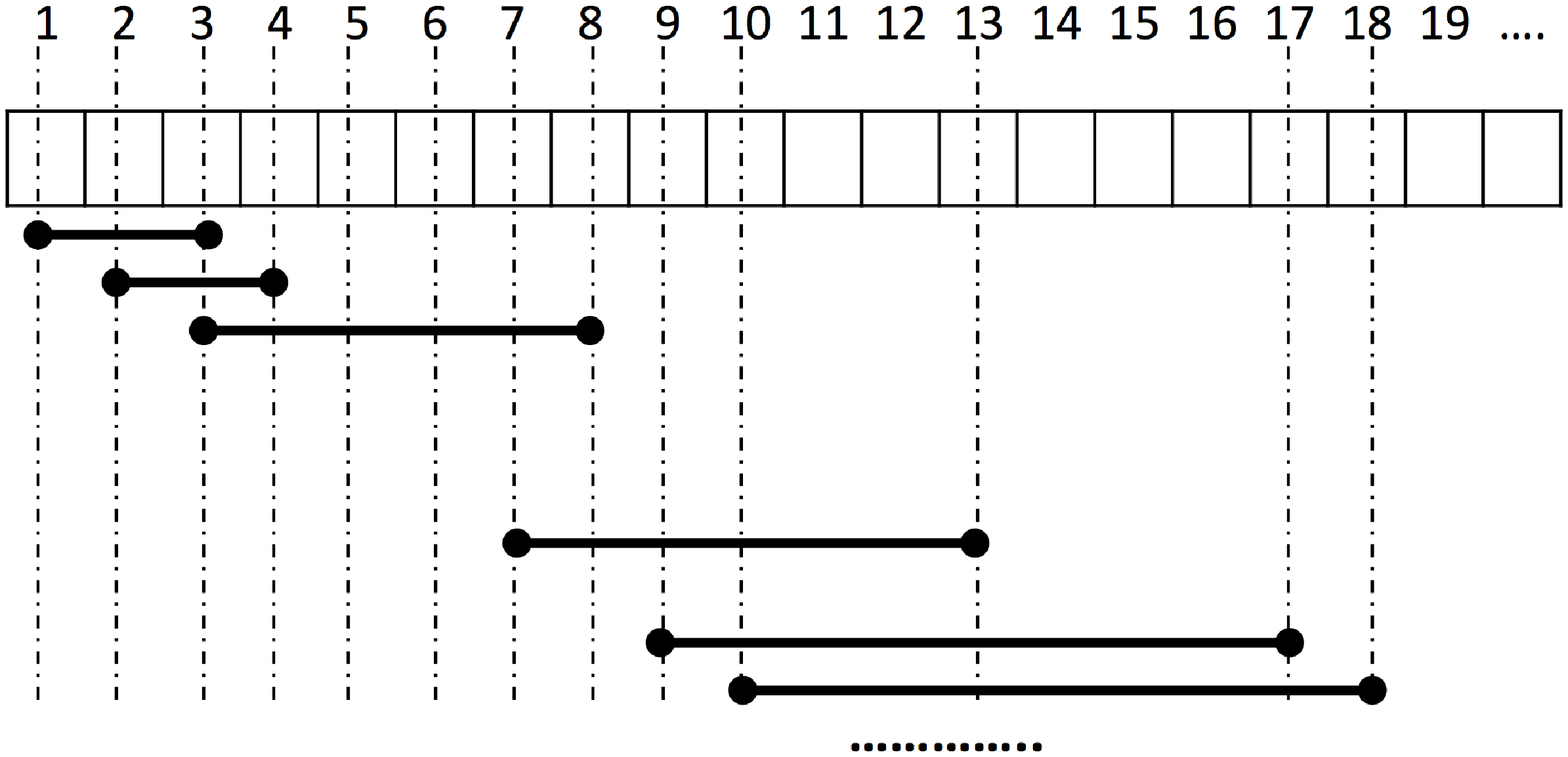}
  \caption{The compact LLR array}
  \label{fig:comp}
\end{subfigure}
  \caption{The geometric perspective of
an example raw LLR array and its corresponding compact LLR array}
\label{fig:raw-compact}
\end{figure}

\begin{lemma}
\label{lem:seq-2}  
Given the rank and lcp arrays, Algorithm~\ref{algo:seq-2} can find 
the leftmost $\lr_k$ for every $k=1,2,\ldots,n$, using a total of 
$O(n)$ space and $O(n (\log n + \beta))$ time, where $\beta$ is the average
number of useful LLR's that cover a string position. 
\end{lemma}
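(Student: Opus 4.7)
The plan is to establish correctness first (every leftmost LR is found), then bound running time and space. I would use the following roadmap.

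First, for \emph{correctness}, I would argue that for every position $k$, the leftmost LR covering $k$ appears in the LLRc array, so it suffices to search inside LLRc. By Lemma~\ref{lem:lr-llr}, every LR is an LLR. Moreover, if an LLR $A$ were a proper substring of another LLR $B$, then $B$ would be a strictly longer repeat covering every position covered by $A$, contradicting the LR-maximality of $A$; hence every LR is a \emph{useful} LLR and thus an element of LLRc. I would then argue that the set of indices in LLRc whose LLR covers a fixed position $k$ is a contiguous range: since both left-ends and right-ends of LLRc entries are strictly increasing, the condition ``left-end $\le k$'' holds as a prefix and ``right-end $\ge k$'' holds as a suffix of the index range, so their intersection is an interval $[b,a]$. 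The binary search locates the leftmost such index $b$, and the rightward walk visits exactly $[b,a]$, during which we record the longest LLR (breaking ties by leftmost), which is necessarily the leftmost LR covering $k$.

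Next, for the \emph{time bound}, I would account for three parts. (i) Building the raw LLR values from the rank and lcp arrays via Lemma~\ref{lem:llr} costs $O(n)$. (ii) Constructing LLRc in one linear pass, using the right-boundary monotonicity from Lemma~\ref{lem:llr-length} to drop any $\llr_i$ with $|\llr_i|=0$ or $|\llr_i|=|\llr_{i-1}|-1$, also costs $O(n)$. (iii) For each of the $n$ positions we do one binary search over LLRc, contributing $O(n\log n)$ total. The walks contribute $\sum_{k=1}^{n}(\text{useful LLRs covering }k)$ total steps, which by definition of $\beta$ equals $\beta n$. Adding these gives $O(n(\log n + \beta))$.

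For the \emph{space bound}, the only large structures are the input rank and lcp arrays, plus LLRc, each storing $O(n)$ machine words (LLRc holds at most $n$ tuples), so the total space is $O(n)$.

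The main obstacle I expect is the contiguity argument for the walk: one must be careful to justify that the leftmost useful LLR covering $k$ and all subsequent ones that still cover $k$ form a single contiguous block in LLRc, so that terminating the walk at the first non-covering entry does not miss any candidate further to the right. Everything else (the correctness of the binary search, the per-position $\beta$ accounting amortized across positions, and the linear-time construction of LLRc) follows directly from Lemmas~\ref{lem:llr}, \ref{lem:llr-length}, and~\ref{lem:lr-llr}.
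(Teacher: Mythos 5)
Your proposal is correct and follows essentially the same route as the paper: the time bound is obtained by charging $O(n)$ for the linear-pass construction of the compact array, $O(\log n)$ per position for the binary search, and amortizing the walk lengths to $\beta n$ via the definition of $\beta$, while the space bound comes from the rank, lcp, and $\llrc$ arrays each holding $O(n)$ words. The only difference is that you make explicit the correctness argument (every LR is a \emph{useful} LLR, and the useful LLRs covering a fixed position $k$ form a contiguous block of $\llrc$ because both endpoints are strictly increasing), which the paper establishes in the prose preceding Algorithm~\ref{algo:seq-2} rather than inside the proof of the lemma; your version of that argument is sound.
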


\begin{proof}
  (1) The time cost for the $\llrc$ array calculation is
  obviously $O(n)$ time.  The algorithm finds the LR of each of the
  $n$ string positions.  The average time cost for the calculation of the LR
  of one position includes the $O(\log n)$ time for the binary search and the
  time cost for the subsequent walk, which is bounded by the average number of
  useful LLR's that cover a string position.  Altogether, the time
  cost is $O(n (\log n + \beta))$.  (2) The main memory
  space is used by the rank, lcp, and $\llrc$ arrays. Each of the rank
  and lcp arrays has  $n$ integers. The $\llrc$ array has
  no more than $n$ pairs of integers. Altogether, the space cost
  $O(n)$ words. \qed
\end{proof}

\begin{theorem}
\label{thm:seq-2} 
We can find the leftmost $\lr_k$ for every $k=1,2,\ldots,n$, using a
total of $O(n)$ space and $O(n (\log n + \beta))$ time, where $\beta$
is the average number of useful LLR's that cover a string position.
\end{theorem}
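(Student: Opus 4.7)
The plan is to mirror the structure of the proof of Theorem~\ref{thm:seq-1}, reducing the statement to Lemma~\ref{lem:seq-2} plus the standard linear-time preprocessing chain that yields the rank and lcp arrays from the raw input string. Since Lemma~\ref{lem:seq-2} already gives the desired $O(n)$ space and $O(n(\log n + \beta))$ time bounds \emph{conditional on} having the rank and lcp arrays in hand, all I need to do is absorb the cost of building these arrays into the overall bound without disturbing it.

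First, I would invoke an existing linear-time suffix array construction algorithm (e.g., the SA-IS algorithm of \cite{KA-SA2005}) to obtain $\sa[1\ldots n]$ in $O(n)$ time and $O(n)$ words of space. Second, I would compute $\rank[1\ldots n]$ by a single pass over $\sa$, using the defining relation $\rank[\sa[i]]=i$; this again takes $O(n)$ time and $O(n)$ space. Third, I would construct the lcp array in $O(n)$ time and $O(n)$ space using the Kasai et al. algorithm~\cite{KLAAP01}, which exploits $\sa$ and $\rank$.

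With the rank and lcp arrays in hand, I would then apply Lemma~\ref{lem:seq-2} to run Algorithm~\ref{algo:seq-2}, which produces the leftmost $\lr_k$ for every $k=1,\ldots,n$ in $O(n(\log n + \beta))$ time and $O(n)$ additional space. Adding the preprocessing cost $O(n)$ to the algorithmic cost $O(n(\log n + \beta))$ leaves the asymptotic time unchanged since $\log n = \Omega(1)$; summing the $O(n)$-word space used by $\sa$, $\rank$, $\lcp$, and $\llrc$ still gives $O(n)$ space overall. This completes the proof.

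I do not expect any genuine obstacle here: every ingredient is either cited from prior work or already proved earlier in the paper, and the only thing to check is that the preprocessing costs are dominated by the bounds claimed in the theorem, which is immediate. The one minor thing to be careful about is making sure the space bound really remains $O(n)$ words after everything is combined and that intermediate arrays (like $\sa$) can either be reused or discarded once $\rank$ and $\lcp$ are built, but this is standard and does not affect the asymptotic claim.
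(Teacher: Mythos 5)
Your proposal is correct and follows essentially the same route as the paper's own proof: construct the suffix array, rank array, and lcp array in $O(n)$ time and space via the cited prior work, then invoke Lemma~\ref{lem:seq-2}. (The only cosmetic slip is attributing SA-IS to \cite{KA-SA2005}, which is a different linear-time construction, but this does not affect the argument.)
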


\begin{proof}
  The suffix array of $S$ can be constructed by existing algorithms
  using $O(n)$ time and space (For example, \cite{KA-SA2005}). After the
  suffix array is constructed, the rank array can be trivially created
  using another $O(n)$ time and space.  We can then use the suffix
  array and the rank array to construct the lcp array using another
  $O(n)$ time and space~\cite{KLAAP01}. Combining the results in
  Lemma~\ref{lem:seq-2}, the theorem is proved. \qed
\end{proof}


\begin{algorithm}[t]
{\small
  \caption{Sequential finding of the leftmost $\lr_k$,
    $k=1,\ldots,n$, using the LLRc array.}
\label{algo:seq-2}

\KwIn{The rank array and 
      the lcp array of the string $S$} 

\smallskip 

\tcc{Calculate the compact LLR array.}
$j \leftarrow 1$; $prev\leftarrow 0$\;

\For{$i=1,2,\ldots,n$}{
  $L \leftarrow \max\{\lcp[\rank[i]],\lcp[\rank[i]+1]\}$\tcp*{Length
    of $\llr_i$}
  \lIf{$L> 0$ and $L \geq prev$}{
    $\llrc[j]\leftarrow \langle i, L\rangle$; $j\leftarrow j+1$\;
  }
  $prev\leftarrow L$\;
}
$size \leftarrow j-1$ \tcp*{Size of the $\llrc$ array. }

\smallskip

\tcc{Calculate $\lr_1,\lr_2,\ldots,\lr_n$.}
\For{$k=1,2,\ldots,n$}{

$\lr \leftarrow \langle -1,0\rangle$ \tcp*{$\langle start, end\rangle$: start and ending
  position of $\lr_k$.}

$start \leftarrow \textrm{BinarySearch}(\llrc,k)$\tcc{Return the
  smallest index of the $\llrc$ array element that covers position
  $k$, if such element exists; otherwise, return $-1$.}

\If{$start \neq -1$}{ 
  \For{$i = start \ldots size$}{
    
    \If(\tcp*[f]{$\llrc[i]$ does not cover
      $k$.}){$\llrc[i].start+\llrc[i].length-1<k$}
      {break\tcp*{Early stop}}
    \ElseIf{$\llrc[i].length > \lr.length$}
    {$\lr\leftarrow \llrc[i]$\;}
  }
}
  print $\lr$\;
}

}

\end{algorithm}


\paragraph{Extension: find all LR's for every string position.}
Algorithm~\ref{algo:seq-2} can also be trivially extended to find all
LR's for every string position by simply reporting all the longest LLR
that covers the position during every walk.  In order to do so, we
will need to walk twice for each string position. The first walk is to get the length of the LR
and the second walk will report all the actual LR's.  We give the
pseudocode of this procedure in Algorithm~\ref{algo:seq-2-ext} in the appendix.
This algorithm certainly has another extra $O(\beta)$ time
cost on average for each LR's calculation due to the extra walk, but
still gives a total of $O(n(\log n + \beta))$ time cost and $O(n)$
space cost.


\begin{corollary}
\label{cor:seq-2-ext} 
We can find all LR's of every string position $k=1,2,\ldots,n$, using a
total of $O(n)$ space and $O(n(\log n + \beta))$ time, where $\beta$ is the
average number of useful LLR's that cover a string position.
\end{corollary}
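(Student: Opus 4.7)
The plan is to adapt Algorithm~\ref{algo:seq-2} in the most direct way possible: replace the single reporting walk with two consecutive walks per string position, the first to determine the target length and the second to report every useful LLR attaining it. First I would build the compact LLR array $\llrc$ in a single left-to-right pass using Lemma~\ref{lem:llr}, keeping only those LLRs whose length exceeds $|\llr_{i-1}|-1$; this costs $O(n)$ time and stores at most $n$ tuples. Together with the rank and lcp arrays, this occupies $O(n)$ words of space, matching the space bound of Theorem~\ref{thm:seq-2}.

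Next, for each position $k \in \{1,\ldots,n\}$, I would binary-search $\llrc$ for the smallest index $start$ whose tuple covers $k$, at a cost of $O(\log n)$, contributing $O(n \log n)$ in total. Because both the left endpoints and the right endpoints of the tuples in $\llrc$ are strictly increasing (the left-end monotonicity is immediate from construction, and the right-end monotonicity is exactly Lemma~\ref{lem:llr-length} combined with the elimination of useless LLRs), the set of useful LLRs covering $k$ is a contiguous suffix-range of $\llrc$ starting at $start$. The first walk sweeps this range, recording the maximum length $\ell^{*}$ encountered; the second walk sweeps it again and emits every tuple whose length equals $\ell^{*}$. The walks both stop at the first tuple whose interval no longer contains $k$.

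Correctness rests on showing that the output is exactly the set of LRs of position $k$. Every LR is an LLR by Lemma~\ref{lem:lr-llr}, and no useless LLR can be an LR: if $\llr_i$ is useless then by definition it is a proper substring of some other $\llr_j$ with $j<i$, so $\llr_j$ is a strictly longer repeat covering every position that $\llr_i$ covers, contradicting maximality of $\llr_i$ as an LR. Hence every LR appears in $\llrc$, and the two walks together report precisely those useful LLRs of maximal length covering $k$, which is exactly the set of LRs at $k$. For the running time, by the definition of $\beta$ each of the two walks at position $k$ touches $O(\beta)$ tuples on average, so the amortized per-position cost is $O(\log n + \beta)$, and summing over all $n$ positions yields $O(n(\log n + \beta))$. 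Combined with the $O(n)$ time and space needed to build the suffix array~\cite{KA-SA2005}, rank array, and lcp array~\cite{KLAAP01}, as in the proof of Theorem~\ref{thm:seq-2}, the stated bounds follow.

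The main point that requires some care, and the step I would pay the closest attention to, is the second walk's termination and exhaustiveness: I need to be sure that after the first tuple in $\llrc$ that fails to cover $k$ (because its left endpoint exceeds $k$), no subsequent tuple covers $k$ either. This is exactly what the strict monotonicity of left endpoints in $\llrc$ guarantees, so the early-stop break used in Algorithm~\ref{algo:seq-2} remains sound in the two-walk variant. Everything else is bookkeeping essentially identical to the single-walk algorithm.
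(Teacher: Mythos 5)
Your proposal is correct and follows essentially the same route as the paper: build the compact LLR array in one pass, binary-search for the leftmost covering tuple, and then perform two linear walks per position (one to determine the maximal length, one to report all tuples attaining it), yielding the same $O(n(\log n+\beta))$ time and $O(n)$ space accounting. The extra care you take with the second walk's early-stop condition (left endpoints strictly increasing in $\llrc$) is sound and matches the paper's prose description of the walk.
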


\paragraph{Comment:} 
(1) The binary searches that are involved in
Algorithms~\ref{algo:seq-2} and~\ref{algo:seq-2-ext} are not cache
friendly.  However, compared with
Algorithms~\ref{algo:seq-1} and \ref{algo:seq-1-ext},
Algorithms~\ref{algo:seq-2} and \ref{algo:seq-2-ext} on average have much
fewer steps (the $\beta$ value) in each walk due to the elimination of
the useless LLR's (see the bottom rows of Table~\ref{tab:steps}).
This makes Algorithms~\ref{algo:seq-2} and \ref{algo:seq-2-ext} much
better choices rather than Algorithms~\ref{algo:seq-1}
and~\ref{algo:seq-1-ext} for run environments that have small cache
size. Such run environments include the GPU architecture, where the
cache size for each thread block is only several KBs.  We will
demonstrate this claim with more details in Section~\ref{sec:exp}.
(2) With more care in the design, Algorithms~\ref{algo:seq-2}
and~\ref{algo:seq-2-ext} are also parallelizable in 
shared-memory architecture (SMA), which is described in the next
Section.

\section{Parallel Implementation on GPU}
\label{sec:para}

In this section, we describe the GPU version of
Algorithms~\ref{algo:seq-1} and \ref{algo:seq-2} and their extensions
(Algorithms~\ref{algo:seq-1-ext} and~\ref{algo:seq-2-ext}).
%
%
%
After we construct the SA, Rank, and LCP arrays on the host CPU~\footnote{
  The SA, Rank, and LCP arrays can also be constructed in parallel on
  GPU~\cite{osipov-spire2012,DK-ppopp2013}, but due to the
  unavailability of the source code or executables from the authors
  of~\cite{osipov-spire2012,DK-ppopp2013}, we choose to construct
  these arrays on the host CPU, without affecting the demonstration of
  the performance gains by our algorithms.}, we transfer
the Rank array and the LCP array to the GPU device memory. 
We start with the calculation of the raw LLR array in parallel.

\paragraph{Compute the raw LLR array.}
After the LCP and Rank arrays are loaded into GPU memory, we launch a
CUDA kernel to compute the raw LLR array on GPU device using massively
parallel
threads, 
as illustrated in Figure~\ref{fig:bigPic}. Each thread $t_i$ on the
device computes a separate element $\llrr[i]=|\llr_i|$ using the
following equation from Lemma~\ref{lem:llr}.
$$\llr[i]=\max\{\lcp[\rank[i], \lcp[\rank[i] + 1]\}$$
Since
each $\llr_i$ must start with string position $i$, we only need to
save the length of each $\llr_i$ in $\llrr[i]$.
%
After creating the raw LLR array, we have two options, which
in turn lead to two different parallel solutions: using the raw LLR
array or the compact LLR array.

\subsection{Compute LR's using the raw LLR array}

%
%
%
%

\remove{ First, we condense the original LLR array by removing the
  redundant LLR[i], the line segments starting with i, with a length
  of LLR[i], which has been completely covered by a previous segment
  LLR[j], where $j < i$.
  We obtain a compact LLR array, named as LLRc array, with $LLRc[i]$
  represented as a tuple of (start, length).  Second, to compute the
  LR, we use the original LLR array as it is, the raw LLR array with a
  lot of redundant information in it.  Let us first discuss the option
  one.  }

The parallel implementation of Algorithm~\ref{algo:seq-1} using the
raw LLR array is straightforward, as presented by the left branch of
Figure~\ref{fig:bigPic}.  With the raw LLR array returned from the
previous kernel launch on the GPU device, we launch a second kernel
for LR calculation.  Each CUDA thread $t_i$ on the device is to find
$\lr_i$ by performing a linear walk in the LLRr array, starting at
$\llrr[i]$ toward the left. The walk continues until it finds an LLRr
array element that does not cover position $i$ or has reached the left
end of the LLRr array. The leftmost or all $\lr_i$ can be reported
during the walk, as discussed in Algorithm~\ref{algo:seq-1}.
Note that in this search, each CUDA thread checks a chunk of
contiguous elements in the LLRr array and this can be cache-efficient.

Taking the calculation of $LR_{10}$ using the raw $\llr$ array shown
in Figure~\ref{fig:raw} as an example.  The corresponding CUDA thread
$t_{10}$ searches a contiguous chunk of the LLRr array starting from
index $10$ down to left in the $\llrr$ array.  We do not search the
LLRr elements that are to the right of index $10$, because these
elements definitely do not cover position $10$ according to the
definition of LLR.  In particularly, thread $t_{10}$ goes through
$\llrr[10]$, $\llrr[9]$, $\llrr[8]$ and $\llrr[7]$ to find the longest
one among the four of them as $\lr_{10}$. Thread $t_{10}$ stops the
search at LLRr position $6$, because $\llrr[6]$ and all LLR's to its
left do not cover position $10$ (Lemma~\ref{lem:llr-length}).


\begin{figure}[t]
\begin{minipage}{.5\textwidth}
\centering
\scalebox{0.55} 
{
\begin{pspicture}(0,-3.2996354)(15.504896,3.3396354)
\usefont{T1}{ptm}{m}{n}
\rput(6.5497394,2.9903646){\psframebox[linewidth=0.04]{SA, Rank, LCP array construction on CPU}}
\usefont{T1}{ptm}{m}{n}
\rput(7.434896,1.6103646){\psframebox[linewidth=0.04]{Compute: $\llrr[i] = \max\{\lcp[\rank[i]], \lcp[\rank[i]+1]\}$, for $i = 1, 2, \ldots, n$}}
\usefont{T1}{ptm}{m}{n}
\rput(3.5209897,0.27036458){\psframebox[linewidth=0.04]{Use the raw LLR array}}
\usefont{T1}{ptm}{m}{n}
\rput(10.315677,0.3503646){Compact the raw LLR array into}
\usefont{T1}{ptm}{m}{n}
\rput(10.502865,-0.08963542){LLRc array of <start,length> tuples}
\usefont{T1}{ptm}{m}{n}
\rput(2.7064583,-1.2496355){Compute every LR[k] by}
\usefont{T1}{ptm}{m}{n}
\rput(2.9284897,-1.7496355){linearly scanning LLRr[i...k], }
\usefont{T1}{ptm}{m}{n}
\rput(9.703333,-1.2696354){Compute LR[k], using binary search to find}
\usefont{T1}{ptm}{m}{n}
\rput(10.550208,-1.7496355){$i = \min\{t | \llrc[t].start + \llrc[t].length - 1 \geq k\}$,}
\usefont{T1}{ptm}{m}{n}
\rput(9.370364,-2.2496355){then a linear walk through LLRc[i...j],}
\usefont{T1}{ptm}{m}{n}
\rput(2.940052,-2.3096354){$i = \min\{j | j + \llr[j] \geq k\}$}
\usefont{T1}{ptm}{m}{n}
\rput(10.470052,-2.7496355){$j = \max\{t | \llrc[t].start + \llrc[t].length - 1 \geq k\}$}
\psframe[linewidth=0.04,dimen=outer](5.305521,-0.89963543)(0.74552083,-2.6996355)
\psframe[linewidth=0.04,dimen=outer](13.3655205,0.7203646)(7.6655207,-0.3996354)
\psframe[linewidth=0.04,dimen=outer](14.3255205,-0.89963543)(6.6855206,-3.1396353)
\psline[linewidth=0.06cm,arrowsize=0.05291667cm 2.0,arrowlength=1.4,arrowinset=0.4]{->}(6.265521,2.6003647)(6.285521,2.0803645)
\psline[linewidth=0.04cm,arrowsize=0.05291667cm 2.0,arrowlength=1.4,arrowinset=0.4]{->}(6.345521,1.2003646)(5.2255206,0.46036458)
\psline[linewidth=0.04cm,arrowsize=0.05291667cm 2.0,arrowlength=1.4,arrowinset=0.4]{->}(6.425521,1.1803646)(7.565521,0.4803646)
\psline[linewidth=0.04cm,arrowsize=0.05291667cm 2.0,arrowlength=1.4,arrowinset=0.4]{->}(2.8455207,-0.15963541)(2.8655207,-0.83963543)
\psline[linewidth=0.04cm,arrowsize=0.05291667cm 2.0,arrowlength=1.4,arrowinset=0.4]{->}(10.505521,-0.43963543)(10.525521,-0.8796354)
\usefont{T1}{ptm}{m}{n}
\rput(4.95349,0.9303646){option 1}
\usefont{T1}{ptm}{m}{n}
\rput(7.832083,0.9703646){option 2}
\psframe[linewidth=0.04,linestyle=dashed,dash=0.16cm 0.16cm,dimen=outer](14.505521,2.0403645)(0.26552084,-3.2996354)
\usefont{T1}{ptm}{m}{n}
\rput(11.923959,2.3103645){Parallelized on GPU}
\end{pspicture} 
}
  \captionof{figure}{Overview of the GPU Implementation}
  \label{fig:bigPic}
\end{minipage}%
\begin{minipage}{.5\textwidth}
  \centering
\vspace*{11.2mm}
\scalebox{0.7} 
{
\begin{pspicture}(0,-1.8468945)(11.217891,1.8491992)
\usefont{T1}{ptm}{m}{n}
\rput(0.50439453,-0.9841992){<1,3>}
\psframe[linewidth=0.03,dimen=outer](1.04,-0.7691992)(0.0,-1.2091992)
\usefont{T1}{ptm}{m}{n}
\rput(1.5443945,-0.9841992){<4,1>}
\psframe[linewidth=0.03,dimen=outer](2.08,-0.7691992)(1.04,-1.2091992)
\usefont{T1}{ptm}{m}{n}
\rput(2.5843945,-0.9841992){<5,3>}
\psframe[linewidth=0.03,dimen=outer](3.12,-0.7691992)(2.08,-1.2091992)
\usefont{T1}{ptm}{m}{n}
\rput(3.6243944,-0.9841992){<8,1>}
\psframe[linewidth=0.03,dimen=outer](4.16,-0.7691992)(3.12,-1.2091992)
\usefont{T1}{ptm}{m}{n}
\rput(0.44776368,-1.5841992){\ \ \ 1}
\psframe[linewidth=0.03,dimen=outer](1.04,-1.3691993)(0.0,-1.8091992)
\usefont{T1}{ptm}{m}{n}
\rput(1.4877636,-1.5841992){\ \ \ 1}
\psframe[linewidth=0.03,dimen=outer](2.08,-1.3691993)(1.04,-1.8091992)
\usefont{T1}{ptm}{m}{n}
\rput(2.5277636,-1.5841992){\ \ \ 1}
\psframe[linewidth=0.03,dimen=outer](3.12,-1.3691993)(2.08,-1.8091992)
\usefont{T1}{ptm}{m}{n}
\rput(3.5822656,-1.5841992){\ \ \ 2}
\psframe[linewidth=0.03,dimen=outer](4.16,-1.3691993)(3.12,-1.8091992)
\usefont{T1}{ptm}{m}{n}
\rput(4.6147947,-1.5841992){\ \ \ 3}
\psframe[linewidth=0.03,dimen=outer](5.2,-1.3691993)(4.16,-1.8091992)
\usefont{T1}{ptm}{m}{n}
\rput(5.6547947,-1.5841992){\ \ \ 3}
\psframe[linewidth=0.03,dimen=outer](6.24,-1.3691993)(5.2,-1.8091992)
\usefont{T1}{ptm}{m}{n}
\rput(6.694795,-1.5841992){\ \ \ 3}
\psframe[linewidth=0.03,dimen=outer](7.28,-1.3691993)(6.24,-1.8091992)
\usefont{T1}{ptm}{m}{n}
\rput(7.743496,-1.5841992){\ \ \ 4}
\psframe[linewidth=0.03,dimen=outer](8.32,-1.3691993)(7.28,-1.8091992)
\usefont{T1}{ptm}{m}{n}
\rput(0.44776368,1.0558008){\ \ \ 1}
\psframe[linewidth=0.03,dimen=outer](1.04,1.2708008)(0.0,0.8308008)
\usefont{T1}{ptm}{m}{n}
\rput(1.5033203,1.0558008){\ \ \ 0}
\psframe[linewidth=0.03,dimen=outer](2.08,1.2708008)(1.04,0.8308008)
\usefont{T1}{ptm}{m}{n}
\rput(2.5433204,1.0558008){\ \ \ 0}
\psframe[linewidth=0.03,dimen=outer](3.12,1.2708008)(2.08,0.8308008)
\usefont{T1}{ptm}{m}{n}
\rput(3.5677636,1.0558008){\ \ \ 1}
\psframe[linewidth=0.03,dimen=outer](4.16,1.2708008)(3.12,0.8308008)
\usefont{T1}{ptm}{m}{n}
\rput(4.607764,1.0558008){\ \ \ 1}
\psframe[linewidth=0.03,dimen=outer](5.2,1.2708008)(4.16,0.8308008)
\usefont{T1}{ptm}{m}{n}
\rput(5.6633205,1.0558008){\ \ \ 0}
\psframe[linewidth=0.03,dimen=outer](6.24,1.2708008)(5.2,0.8308008)
\usefont{T1}{ptm}{m}{n}
\rput(6.7033205,1.0558008){\ \ \ 0}
\psframe[linewidth=0.03,dimen=outer](7.28,1.2708008)(6.24,0.8308008)
\usefont{T1}{ptm}{m}{n}
\rput(7.7277637,1.0558008){\ \ \ 1}
\psframe[linewidth=0.03,dimen=outer](8.32,1.2708008)(7.28,0.8308008)
\usefont{T1}{ptm}{m}{n}
\rput(0.4547949,0.45580077){\ \ \ 3}
\psframe[linewidth=0.03,dimen=outer](1.04,0.6708008)(0.0,0.23080078)
\usefont{T1}{ptm}{m}{n}
\rput(1.5022656,0.45580077){\ \ \ 2}
\psframe[linewidth=0.03,dimen=outer](2.08,0.6708008)(1.04,0.23080078)
\usefont{T1}{ptm}{m}{n}
\rput(2.5277636,0.45580077){\ \ \ 1}
\psframe[linewidth=0.03,dimen=outer](3.12,0.6708008)(2.08,0.23080078)
\usefont{T1}{ptm}{m}{n}
\rput(3.5677636,0.45580077){\ \ \ 1}
\psframe[linewidth=0.03,dimen=outer](4.16,0.6708008)(3.12,0.23080078)
\usefont{T1}{ptm}{m}{n}
\rput(4.6147947,0.45580077){\ \ \ 3}
\psframe[linewidth=0.03,dimen=outer](5.2,0.6708008)(4.16,0.23080078)
\usefont{T1}{ptm}{m}{n}
\rput(5.662266,0.45580077){\ \ \ 2}
\psframe[linewidth=0.03,dimen=outer](6.24,0.6708008)(5.2,0.23080078)
\usefont{T1}{ptm}{m}{n}
\rput(6.6877637,0.45580077){\ \ \ 1}
\psframe[linewidth=0.03,dimen=outer](7.28,0.6708008)(6.24,0.23080078)
\usefont{T1}{ptm}{m}{n}
\rput(7.7277637,0.45580077){\ \ \ 1}
\psframe[linewidth=0.03,dimen=outer](8.32,0.6708008)(7.28,0.23080078)
\psline[linewidth=0.04cm,arrowsize=0.05291667cm 2.0,arrowlength=1.4,arrowinset=0.4]{->}(0.52,0.15080078)(0.52,-0.6891992)
\psline[linewidth=0.04cm,arrowsize=0.05291667cm 2.0,arrowlength=1.4,arrowinset=0.4]{->}(3.6,0.11080078)(1.56,-0.6891992)
\psline[linewidth=0.04cm,arrowsize=0.05291667cm 2.0,arrowlength=1.4,arrowinset=0.4]{->}(4.6,0.11080078)(2.76,-0.6891992)
\psline[linewidth=0.04cm,arrowsize=0.05291667cm 2.0,arrowlength=1.4,arrowinset=0.4]{->}(7.48,0.11080078)(3.68,-0.6891992)
\usefont{T1}{ptm}{m}{n}
\rput(9.256045,1.0558008){Flag array}
\usefont{T1}{ptm}{m}{n}
\rput(9.597803,0.45580077){raw LLR array}
\usefont{T1}{ptm}{m}{n}
\rput(5.8527927,-0.9441992){compact LLR array}
\usefont{T1}{ptm}{m}{n}
\rput(9.845957,-1.6241993){Prefix\_Sum array}
\usefont{T1}{ptm}{m}{n}
\rput(0.51145506,1.6558008){$t_1$}
\usefont{T1}{ptm}{m}{n}
\rput(1.551455,1.6558008){$t_2$}
\usefont{T1}{ptm}{m}{n}
\rput(2.6314552,1.6558008){$t_3$}
\usefont{T1}{ptm}{m}{n}
\rput(3.6714551,1.6558008){$t_4$}
\usefont{T1}{ptm}{m}{n}
\rput(4.711455,1.6558008){$t_5$}
\usefont{T1}{ptm}{m}{n}
\rput(5.7514553,1.6558008){$t_6$}
\usefont{T1}{ptm}{m}{n}
\rput(6.831455,1.6558008){$t_7$}
\usefont{T1}{ptm}{m}{n}
\rput(7.871455,1.6558008){$t_8$}
\usefont{T1}{ptm}{m}{n}
\rput(9.437392,1.6158007){GPU threads}
\end{pspicture} 
}
  \captionof{figure}{LLR compaction on GPU}
  \label{fig:compact}
\end{minipage}
\end{figure}


\subsection{Compute LR's using the compact LLR array}

\subsubsection{LLR Compaction.}
The right branch of Figure~\ref{fig:bigPic} shows the second option in
computing LR's on GPU. That is to use the compact LLR array.  We first
create the compact LLR array, named as LLRc, from the raw LLR array,
which has been created and preserved on the device memory.  To avoid
the expensive data transfer between the host and the device and to
achieve more parallelism, we perform the LLR array compaction on the
GPU device in parallel.  We launch three CUDA kernels to perform the
compaction, denoted as $\mathcal{K}_1$, $\mathcal{K}_2$, and
$\mathcal{K}_3$.  As shown in Figure~\ref{fig:compact}, after the LLRr
array is constructed on the device, we first launch kernel
$\mathcal{K}_1$ to compute a flag array $Flag[1\ldots n]$ in parallel,
where the value of each element $Flag[i]$ is assigned by a separate
thread $t_i$ as follows: (1) $Flag[1]=1$ iff $\llrr[1]>0$.  (2)
$Flag[i] = 1$, iff $\llrr[i]>0$ and $\llrr[i] \geq \llrr[i-1]$, for
$i=2,3,\ldots,n$. $Flag[i]=0$ means $\llr_i$ is useless and thus can
be eliminated.

After the Flag array is constructed from kernel $\mathcal{K}_1$, we launch kernel
$\mathcal{K}_2$ to calculate the prefix sum of the Flag array on the device:
$\ps[i]=\sum_{j=1}^{i}Flag[j]$.  We modify the prefix sum function
provided by the CUDA toolkit for this purpose.

With the prefix sum array and the Flag array, we launch kernel $\mathcal{K}_3$ to
copy the useful LLRr array elements into the LLRc array, as illustrated in
Figure~\ref{fig:compact}.  Each thread $t_i $ on the device moves in
parallel the $\llr_i$ to an unique destination $\llrc[\ps[i]]$, if
$Flag[i] = 1$. That is, $\llrc[\ps[i]] = \langle i, \llrr[i]\rangle$,
if $Flag[i] = 1$.  Each element in the $\llrc$ array is a useful LLR and
is represented by a tuple of $\langle start, length\rangle$, the
start and ending position of the LLR. 

\remove{
{\bf (Please Clarify the context of this paragraph ...)} There are
two points worth mentioning here. First, we could merge kernels
$\mathcal{K}_1$ with the previous kernel that calculates the raw LLR
array.  But we find the merge has negligible effects in performance
improvement, mainly because both kernels only use the data that have
been stored on the device. Second, as we consider each LLR in the LLRc
array as a line segment, the start positions of these LLR's are still
sorted, but not contiguous any more because we have discarded the
useless LLR's. As shown in Figure~\ref{fig:comp}, the LLR's starting at
position 4, 5, 6, 8 are useless, thus are discarded. Therefore, for
example, the LLR starting at position 10 is not necessarily stored at
the index 10 in the LLRc array.}


\remove{ Back to Figure~\ref{fig:bigPic}, we focus on the first
  parallel solution, which uses a compact LLRc array.  As we have
  suggested, each item $LLRc[i]$ represents a line segment, by using a
  tuple \{LLRc[i].start, LLRc[i].length\}.  After compaction, line
  segments in LLRc array are ordered by both start position and end
  position of the line segments, which has been proved in previous
  section \ref{}; }

\subsubsection{Compute LR's.}
After the LLRc array is prepared, we calculate the LR for
every string position in parallel.  Recall that the calculation of
each $\lr_k$, for each $k=1,2,\ldots,n$, is a search for the longest
useful LLR that covers position $k$. We also know all these relevant
LLR's that we need to search comprise a continuous chunk of the
LLRc array. The start position of the chunk can be found using a
binary search as we have explained in the discussion of
Algorithm~\ref{algo:seq-2}.  After that, a simple linear walk toward
the right is performed. The walk continues until it finds an
LLRc array element that does not cover position $k$ or has reached the
right end of the LLRc array. 

To compute the LR's using the LLRc array, we launch another CUDA
kernel, in which each CUDA thread $t_k$ first performs a binary search
to find the start position of the linear walk and then walk through
the relevant LLRc array elements to find either all LR's or a single LR
covering position $k$.

Referring to Figure~\ref{fig:comp}, we take the LR calculation covering the
string position $9$ as an example. Recall that we have discarded
all useless LLR's in the LLRc array, so the LLRc array element at
index $9$ is not necessarily the rightmost LLR that cover string
position $9$. Therefore, we have to perform a binary search to locate
that leftmost LLRc array element by taking advantage of the nice
property of the LLRc array that both the start and ending positions of
all LLR's in it are strictly increasing.
After thread $t_{9}$ locates the LLRc element $\llrc[4]$, the leftmost useful LLR that
covers the string position $9$, it performs a linear walk toward the
right. The walk will continue until it meets $\llrc[6]$, which does 
not cover position $9$. Thread $t_9$ will return the longest ones
among $\llrc[4\ldots 6]$ as $\lr_9$.

\subsection{Advantages and Disadvantages: $\llrr$ vs.\ $\llrc$}

When the raw LLR array is used, the algorithm is straightforward and
easy to implement, because there is no needs to perform the LLR
compaction on the device. However, with a raw LLR array, we could
have a large number of useless LLR's in the raw LLR array, especially
when the average length of the longest repeats is quite large. For that reason, the
subsequent linear walk for each CUDA thread can take many steps,
making the overall search performance worse. 

In contrast, under a compact LLR array, we have to perform the LLR
compaction, which involves data coping and requires extra memory usage
for the $Flag$ and the prefix sum array on the device. In addition, a
binary search, which is not present with a raw LLR array, is required
to locate the first LLR for the linear walk. The advantage of a
compact LLR array is that we remove the useless LLR's and dramatically
shorten the linear walk distance.We provide more analysis and
comparison between these two solutions in the experiment section.

\section{Experimental Study}
\label{sec:exp}


\paragraph{Experiment Environment Setup.}
We conducted our experiments  on a computer running GNU/Linux
with a kernel version 3.2.51-1.  The computer is equipped with an
Intel Xeon 2.40GHz E5-2609 CPU with 10MB Smart Cache and has
 16GB RAM.
 We used a GeForce GTX 660 Ti GPU for our parallel tests. 
 The GPU consists of 1344 CUDA cores and 2GB of RAM memory.  The GPU
 is connected with the host computer with a PCI Express 3.0 interface.
 We install CUDA toolkit 5.5 on the host computer.  We use {\tt C} to
 implement our sequential algorithms and use {\tt CUDA C} to implement
 our parallel solutions
 on the GPU, using {\tt gcc 4.7.2} with {\tt -O3}
 option and {\tt nvcc V5.5.0} as the compilers.
We test our algorithms on real-world datasets including biological and
non-biological data downloaded from the Pizza\&Chili Corpus.  The
datasets we used are the three $50$MB {\tt DNA}, {\tt English}, and
{\tt Protein} pure ASCII text files, each of which thus represents 
a string of 
$50\times 1024 \times 1024$ characters.

\paragraph{Measurements.} 

We measured the average time cost of three runs of our program.
In order to better highlight the comparison of the algorithmics
between the old and our new solutions, we did not include the time
cost for the I/O operations that save the results.  
For the same purpose,
we also did not include the time cost for the SA, Rank, and LCP array
constructions, because in both the old and our new
solutions, these auxiliary data structures are constructed based on
the same best suffix array construction code available on the
Internet~\footnote{\url{http://code.google.com/p/libdivsufsort/}}.  
Our source code for this work is also available on website.\footnote{
\url{http://penguin.ewu.edu/~bojianxu/publications}}

\subsection{Time}
In the top three charts of Figure~\ref{fig::time-size}, using three
datasets, we compare different algorithms that return only the
leftmost LR for every string position of the input data.  In the
bottom three charts, we present the performance of our algorithms that
are able to find $all$ LR's for every string position. We compare our
new algorithms with the existing optimal sequential
algorithm~\cite{IKX-repeat-CORR2015}, which can only find the leftmost LR
for every string position. Table~\ref{tab:speedup} summarizes the
speedup of our algorithms against the old optimal algorithm.
From  experiments, we are able to make the following observations. 



\paragraph{Sequential algorithms on CPU.} 
Our new sequential algorithm using the raw LLR  is consistently
faster than the old optimal algorithm by a factor of $1.97$--$3.44$, while our
new sequential algorithm that uses the compact LLR array is consistently
slower. This observation is true in both finding the leftmost LR and
all LR's.  (Please note that the old optimal algorithm always finds the
leftmost LR only.)

On the host CPU, three dominating factors contribute to the better
performance of algorithms using a raw LLR array rather than using a
compact LLR array. First, although the compact LLR array can still be
constructed in one pass, but the construction involves a lot more
computational steps than those needed in the construction of the raw
LLR array.  Second, sequential algorithms that use a compact LLR array
require a binary search in order to locate the starting position of
the subsequent linear walk in the calculation of every LR.  However,
binary searches are not required if we work with a raw LLR array. As
it is known, binary search over a large array is not cache
friendly. Through profiling, we observe that the binary search
operations consume from $63\%$ to $73\%$ of the total execution time.  Third,
even though for some datasets the search range size (or the number of
walk steps) with a raw LLR array could be $10,000$ times larger than
that using a compact LLR array, as shown in table \ref{tab:steps}, the
L2 cache ($10$MB) of the host CPU is large enough to cache the range
of $contiguous$ LLR's that each linear walk needs to go through.  Such
efficient data caching helps all walks take less than a total of 100
milliseconds on the host CPU, accounting for less than $5\%$ of the
total execution time, even with the raw LLR array.  In other words,
given a large cache memory, the number of walk steps is no longer a
dominating factor in the overall performance.


\paragraph{Parallel algorithms on GPU.} 
Our new parallel algorithm on GPU using the compact LLR array is
consistently faster than its counterpart that uses the raw LLR array,
which is consistently faster than the old optimal algorithm by a
factor of $8.32$--$14.62$ in finding the leftmost LR and
$6.36$--$10.35$ in finding all LR's.

Unlike the sequential algorithm on the host CPU, the performance of
the parallel algorithm on the GPU device is dominated by the number of
LLR's (the number of walk steps) that each walk will go through.  As we
profile our GPU implementation, we observe that with the raw LLR
array, all linear walks on the GPU take roughly a total of eight
seconds for the $English$ dataset.  But, the walks take
roughly $70$ milliseconds only if using a compact LLR array on the GPU.
This is because: (1) the small GPU L2 cache ($384$KB shared by all
streaming multiprocessors) cannot host as many LLR's as what the CPU
L2 cache (10MB) can host, resulting in more cache-read misses and more
expensive global memory accesses. (2) The number of walk steps with a
compact LLR array is less than that with a raw LLR array by a factor
of up to four orders of magnitude (see Table~\ref{tab:steps}).  (3) The
extra time cost for the LLR compaction that is needed when using the
compact LLR array become much less significant in the total execution time on GPU.  
On the host CPU, our sequential solution
takes roughly $1.3$ seconds to perform the LLR compaction for datasets
of $50$MB and accounts for $20\%$ of the total time
cost on average. However, it takes less than $30$ milliseconds on the GPU,
accounting for only $9.5\%$ of the entire time cost.  We achieve more
than $40$ times speedup in the LLR compaction by utilizing GPU device.

The first two reasons above are reassured by the experimental results
regarding the {\tt English} dataset, which we purposely chose to use.
The {\tt English} file is synthesized by simply concatenating several
English texts, and thus the text has many repeated paragraphs, which
in turn creates many \emph{useless} LLR's in the data.  In this case,
with the raw LLR array, each walk will have a large number of steps due to
such useless LLR's.  However, after we compact the raw LLR array, the
number of walk steps can be significantly reduced
(Table~\ref{tab:steps}) and consequently the GPU code's performance is
significantly improved (Figure~\ref{fig::time-size}).

\begin{figure}[t]
\begin{center}
\begin{tabular}{ccc}
\includegraphics[scale=.62]{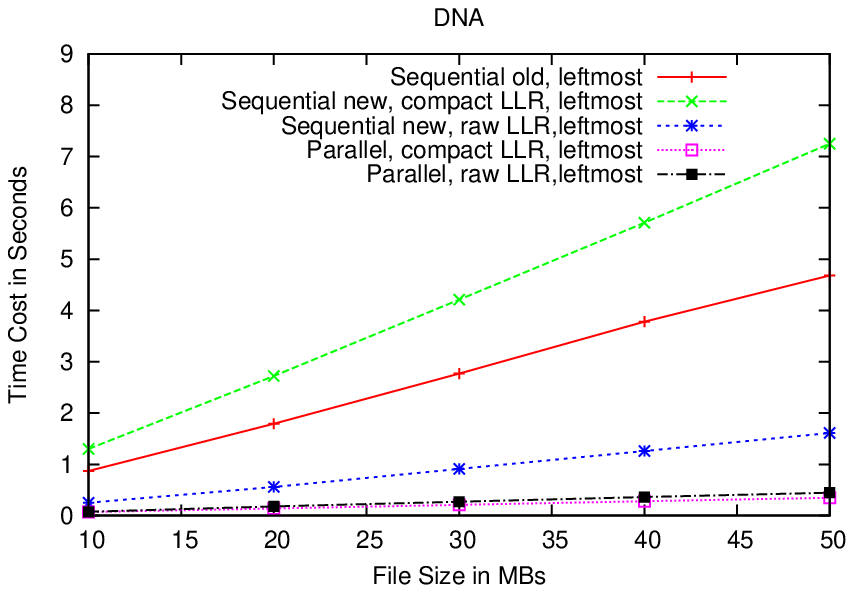} &
\includegraphics[scale=.62]{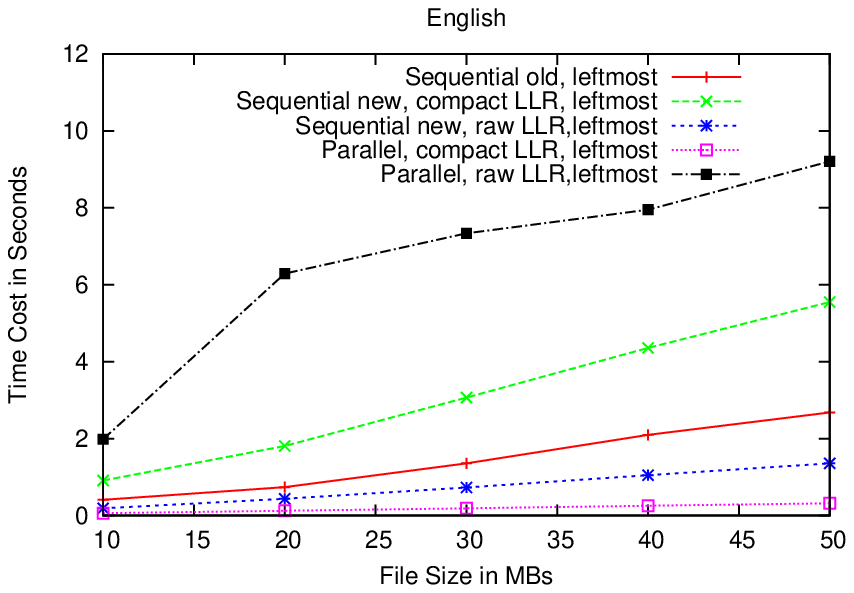} & 
\includegraphics[scale=.62]{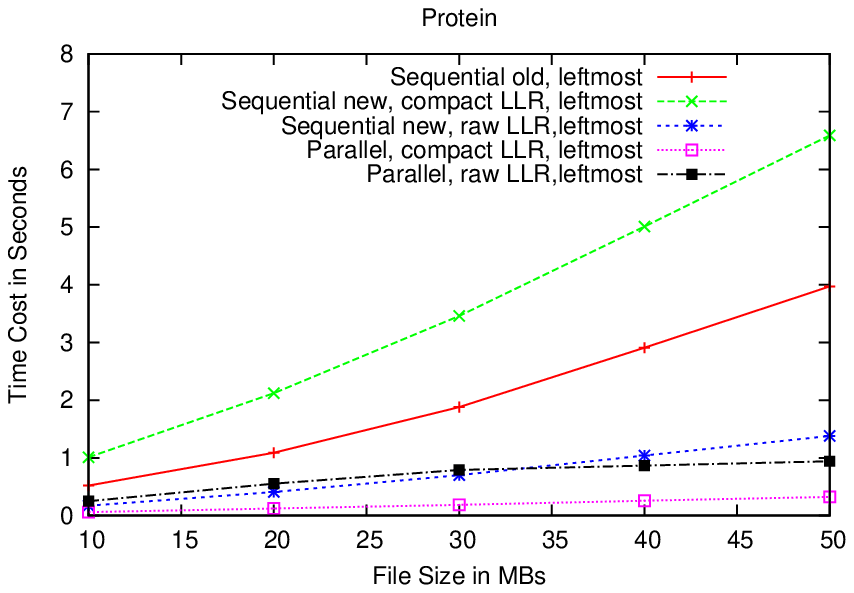}  \\
\includegraphics[scale=.62]{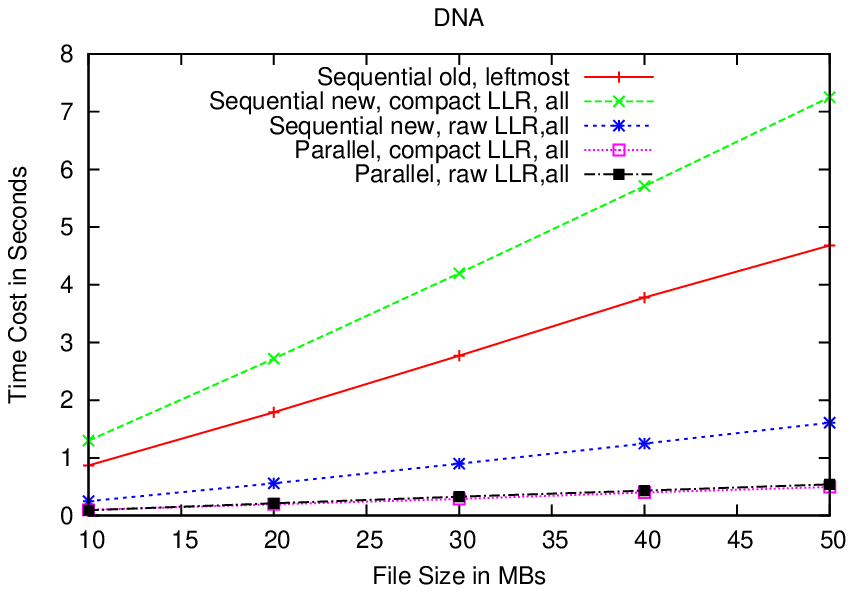} &
\includegraphics[scale=.62]{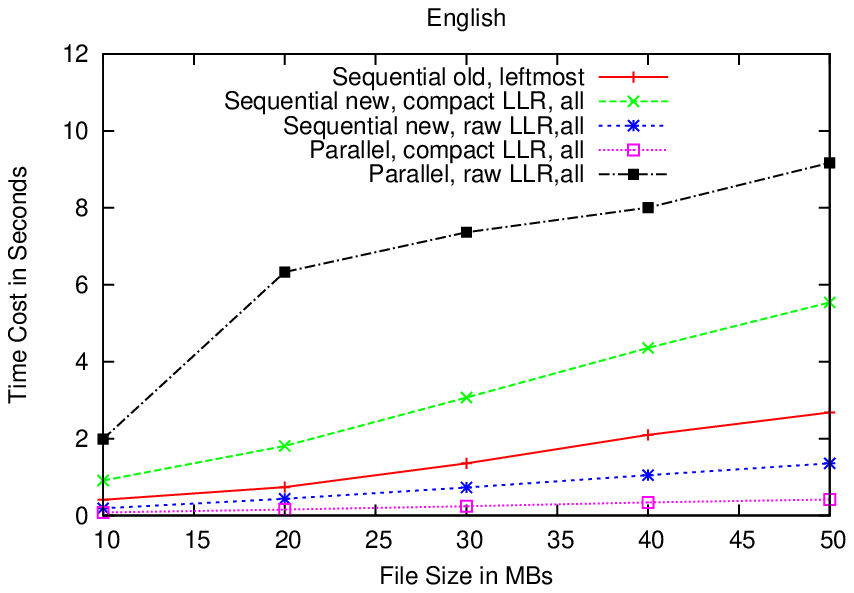} & 
\includegraphics[scale=.62]{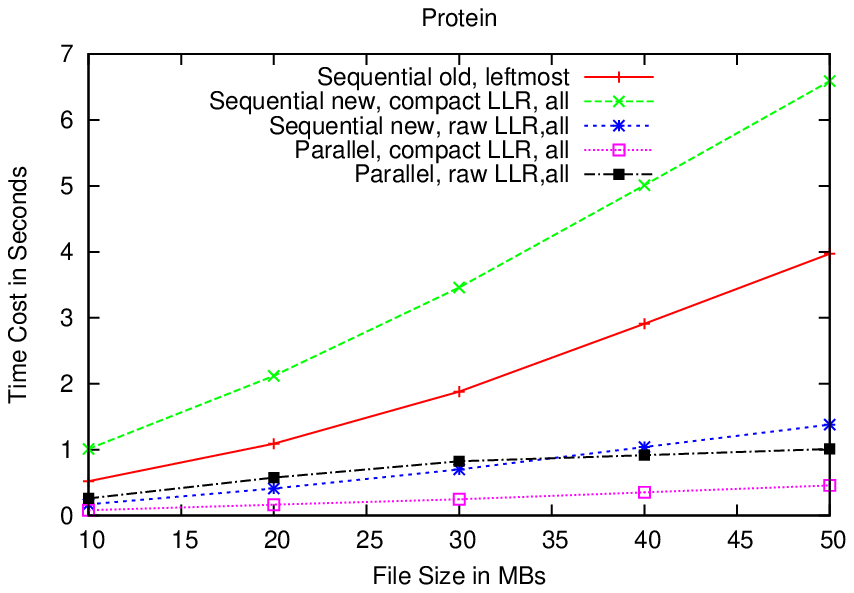} 
\end{tabular}
\end{center}
\caption{Time Cost vs Dataset Size. The three charts on the top and
  bottom show the experimental results
  on finding the leftmost and all repeats of every string position, respectively. }
\label{fig::time-size}
\end{figure}

\remove{

\begin{table}[t]
  \centering
  \begin{tabular}{|l|c|c|c|c|}
    \hline
    &Sequential & Sequential& Parallel & Parallel\\
    &No Compact &No Compact & Compact & Compact\\
    &Leftmost& All & Leftmost & All\\
    \hline
    DNA&$2.91$x&$2.91$x&$13.48$x&$\, \ 9.43$x\\
    \hline
    English&$1.97$x&$1.97$x&$\, \ 8.32$x&$\, \ 6.36$x\\
    \hline
    Protein&$3.44$x&$3.44$x&$14.62$x&$10.35$x\\   
    \hline
  \end{tabular}
  \caption{Speedup with 50MB Files}
  \label{tab:speedup}
\end{table}
\begin{table}[h!]
  \centering
  \begin{tabular}{|l|c|c|c|}
    \hline
    &Old (MBs) & Ours (MBs)& Space Saving \\
    \hline
    DNA & 792.77 & 650.39 & 17.96\% \\
    \hline
    English & 654.02&650.39&\,\ 0.56\% \\
    \hline
    Protein & 773.53&650.39&15.92\%\\ 
    \hline
  \end{tabular}
  \caption{RAM Usage Comparison for 50MB Files}
  \label{tab:speedup}
\end{table}

}

\begin{table}[t]
\parbox{.45\linewidth}{
\centering
  \begin{tabular}{|l|c|c|c|c|}
    \hline
    &Sequential & Sequential& Parallel & Parallel\\
    &No Compact &No Compact & Compact & Compact\\
    &Leftmost& All & Leftmost & All\\
    \hline
    DNA&$2.91$x&$2.91$x&$13.48$x&$\, \ 9.43$x\\
    \hline
    English&$1.97$x&$1.97$x&$\, \ 8.32$x&$\, \ 6.36$x\\
    \hline
    Protein&$3.44$x&$3.44$x&$14.62$x&$10.35$x\\   
    \hline
  \end{tabular}
  \caption{Speedup with 50MB Files}
  \label{tab:speedup}
}
\hfill
\parbox{.45\linewidth}{
\centering
\vspace*{7.5mm}
  \begin{tabular}{|l|c|c|c|}
    \hline
    &Old (MBs) & Ours (MBs)& Space Saving \\
    \hline
    DNA & 792.77 & 650.39 & 17.96\% \\
    \hline
    English & 654.02&650.39&\,\ 0.56\% \\
    \hline
    Protein & 773.53&650.39&15.92\%\\ 
    \hline
  \end{tabular}
  \caption{RAM Usage Comparison for 50MB Files}
  \label{tab:mem}
}
\end{table}

\subsection{Space}
\label{subsec:space}
Table~\ref{tab:mem} shows the peak memory usage of both the old and
our new algorithms for datasets of size $50$MBs.  The memory usage of
all of our algorithms is the same.  This is because the space usage by
the SA, Rank, and LCP array dominate the peak memory usage of all of
our algorithms. On the other hand, due to its 2-table system that
helps achieve the theoretical $O(n)$ time complexity, the old optimal
algorithm's space usage is relevant to the dataset type
 and is higher than ours.

\subsection{Scalability}
Although our algorithms have a superlinear time complexity in theory,
but they all scale well in practice as shown by
Figure~\ref{fig::time-size}.  As we increase the size of the test
data, we observe a consistent speedup. 
In addition, we \emph{did} conduct experiments on datasets of 100MB on the GPU device
by using a 2D grid of CUDA threads in order to create more than 100 million threads on the device.
When finding the leftmost LR for each string position, we observed the same speedups as shown in Figure~
\ref{fig::time-size}. 

 On the host CPU, the large
cache size dramatically reduces the total number of memory reads
during the linear walk in a raw LLR array and thus enables us to
eliminate the expensive binary search operations by using a raw LLR
array. On the GPU device, although all data is stored in the global
memory, a compact LLR array helps greatly reduce the total number of
global memory access; each thread linearly searches a smaller number
of LLR's. As shown in Table~\ref{tab:steps}, the average number of
walk steps in a compact LLR array is no more than six, which enables
the linear walk to be considered as a \emph{constant}-time operation.


\section{Conclusion and Future Work}
We proposed conceptually simple and thus easy-to-implement solutions
for longest repeat finding over a string. Our algorithm although is
not optimal in time theoretically, but runs faster than the old
optimal algorithm and uses less space. Further, our algorithm can find
all longest repeats of every string position, whereas the old optimal
solution can only find the leftmost one. Our algorithm can be
parallelized in shared-memory architecture and has been implemented on
GPU using the data parallelism to gain further speedup.

Our GPU solution is roughly 4.5 times quicker than our \emph{best}
sequential solution on the CPU, and up to 14.6 times quicker than the
old optimal solution on the CPU. Also, we improve the LLR compaction
performance by a factor of 40 on GPU. The multiprocessors in our
current GPU have a built-in L1 and L2 cache, which help coalesce some
global memory accesses. In the future, we will further optimize our
parallel solution by utilizing the GPU shared memory or texture memory
to further reduce global memory access.



\bibliographystyle{splncs03}
\bibliography{bibjsv,repeat,pm}

\newpage

\appendix

\section*{Appendix}

\remove{

\begin{table}[h]
\centering
\def\0{\phantom{0}}
{\footnotesize
\begin{tabular}{c|c|c|l}
\toprule
$i$ & $\lcp[i]$  & $\mathit{\sa}[i]$ & suffixes\\
\hline
\hline
\01 & 0 & 11\0  &{\tt i}\\ 
\02 & 1 & \08\0  & {\tt  ippi}\\ 
\03 & 1 & \05\0  & {\tt  issippi}\\ 
\04 & 4 & \02\0  & {\tt  ississippi}\\ 
\05 & 0 & \01\0  & {\tt  mississippi}\\
\06 & 0 & 10\0  & {\tt  pi}\\ 
\07 & 1 &  \09\0  & {\tt ppi}\\
\08 & 0 & \07\0  & {\tt sippi}\\
\09 & 2  & \04\0  & {\tt sissippi}\\
10 & 1  & \06\0  & {\tt ssippi}\\ 
11 & 3 & \03\0  & {\tt ssissippi}\\
12 & 0 & -- & --\\ \bottomrule
\hline
\end{tabular}
}
\caption{The suffix array and the lcp array of an example string $S={\tt mississippi}$.}
\label{tab:suflcp}
\end{table}

}


\begin{algorithm}[h!]
{\scriptsize
  \caption{Sequential finding of all $\lr_k$ for $k=1,2,\ldots,n$, using the raw LLR array.}
\label{algo:seq-1-ext}

\KwIn{The rank array and 
      the lcp array of the string $S$} 

\smallskip 

\tcc{Calculate $\llr_1,\llr_2,\ldots,\llr_n$.}

\lFor{$i=1,2,\ldots,n$}{
  $\llrr[i] \leftarrow \max\{\lcp[\rank[i]],\lcp[\rank[i]+1]\}$\tcp*{Length
    of $\llr_i$}

}

\smallskip

\tcc{Calculate $\lr_1,\lr_2,\ldots,\lr_n$.}
\For{$k=1,2,\ldots,n$}{

\tcc{Calculate the length of $\lr_k$.}
$length \leftarrow 0$ \tcp*{Length of $\lr_k$}

\For{$i = k$ down to $1$}{

  \If(\tcp*[f]{$\llr_i$ does not exist or does not cover $k$.}){$i+\llrr[i]-1<k$}{break\tcp*{Early stop}}
  \lElseIf{$\llrr[i] \geq length$}
   {$length\leftarrow \llrr[i]$\;}
}

\tcc{Calculate all $\lr_k$.}

\If{$length>0$}{
\For{$i = k$ down to $1$}{

  \If(\tcp*[f]{$\llr_i$ does not exist or does not cover $k$.}){$i+\llrr[i]-1<k$}{break\tcp*{Early stop}}
  \ElseIf{$\llrr[i] = length$}
   {print $\langle i,\llrr[i]\rangle$\tcp*{the start and ending positions
       of $\lr_k$}}
}

}

\lElse{
  print $\langle -1,0\rangle$\tcp*{$\lr_k$ does not exist. }  
}
}
}

\end{algorithm}



\begin{algorithm}[h!]
{\scriptsize
  \caption{Sequential finding of all $\lr_k$ for
    $k=1,\ldots,n$, using the LLRc array.}
\label{algo:seq-2-ext}

\KwIn{The rank array and 
      the lcp array of the string $S$} 

\smallskip 

\tcc{Calculate the compact LLR array.}
$j \leftarrow 1$; $prev\leftarrow 0$\;

\For{$i=1,2,\ldots,n$}{
  $L \leftarrow \max\{\lcp[\rank[i]],\lcp[\rank[i]+1]\}$\tcp*{Length
    of $\llr_i$}
  \lIf{$L> 0$ and $L \geq prev$}{
    $\llrc[j]\leftarrow \langle i, L\rangle$; $j\leftarrow j+1$\;
  }
  $prev\leftarrow L$\;
}
$size \leftarrow j-1$ \tcp*{Size of the $\llrc$ array. }

\smallskip

\tcc{Calculate $\lr_1,\lr_2,\ldots,\lr_n$.}
\For{$k=1,2,\ldots,n$}{
$start \leftarrow \textrm{BinarySearch}(\llrc,k)$\tcc{Return the
  smallest index of the $\llrc$ array element that covers position
  $k$, if such element exists; otherwise, return $-1$.}

\If{$start \neq -1$}{ 
\tcc{Find the length of $|\lr_k|$.}
$length \leftarrow 0$\;

  \For{$i = start \ldots size$}{
    
    \If(\tcp*[f]{$\llrc[i]$ does not cover
      $k$.}){$\llrc[i].start+\llrc[i].length-1<k$}
      {break\tcp*{Early stop}}
    \lElseIf{$\llrc[i].length > length$}
    {$length\leftarrow \llrc[i].length$\;}
  }
\tcc{Report all $\lr_k$.}
  \For{$i = start \ldots size$}{
    
    \If(\tcp*[f]{$\llrc[i]$ does not cover
      $k$.}){$\llrc[i].start+\llrc[i].length-1<k$}
      {break\tcp*{Early stop}}
    \ElseIf{$\llrc[i].length = length$}
    {print $\langle \llrc[i].start, \llrc[i].length\rangle$ 
          \tcp*{$\langle start, end\rangle$: start and ending
            pos of $\lr_k$.}}
  }
}
\lElse{print $\langle -1,0\rangle$\;}
}
}
\end{algorithm}


\end{document}